\newtheorem{defi}{Definition~}[section]
\newtheorem{thm}[defi]{Theorem~}
\newtheorem{cor}[defi]{Corollary~}
\newtheorem{lem}[defi]{Lemma~}
\newtheorem{quest}{Question~}
\title{Zero forcing number, constrained matchings \\and strong structural controllability}
\author{Maguy Trefois\thanks{ICTEAM Institute - Department of Applied Mathematics, Universit\' e catholique de Louvain, Avenue Georges Lema\^itre 4 B-1348 Louvain-La-Neuve, Belgium. maguy.trefois@uclouvain.be, jean-charles.delvenne@uclouvain.be.} \and Jean-Charles Delvenne$^*$ \thanks{Center for Operations Research and Econometrics (CORE), Universit\'e catholique de Louvain, Louvain-La-Neuve, Belgium.}}
\date{May 24, 2015}
\begin{document}

\maketitle

\begin{center}
\textbf{Abstract}
\end{center}

The zero forcing number is a graph invariant introduced to study the minimum rank of the graph. In 2008, Aazami proved the NP-hardness of computing the zero forcing number of a simple undirected graph. We complete this NP-hardness result by showing that the non-equivalent problem of computing the zero forcing number of a directed graph allowing loops is also NP-hard. The rest of the paper is devoted to the strong controllability of a networked system. This kind of controllability takes into account only the structure of the interconnection graph, but not the interconnection strengths along the edges. We provide a necessary and sufficient condition in terms of zero forcing sets for the strong controllability of a system whose underlying graph is a directed graph allowing loops. Moreover, we explain how our result differs from a recent related result discovered by Monshizadeh \textit{ et al}. Finally, we show how to solve the problem of finding efficiently a minimum-size input set for the strong controllability of a self-damped system with a tree-structure.

\section{Introduction}
\label{}
The minimum rank problem has been motivated by the Inverse Eigenvalue Problem of a Graph (IEPG) \cite{fallat,hogben1}, intensively studied during the last fifteen years. In the IEPG, a simple undirected graph $G$ is considered. Such a graph defines a set $\mathcal{Q}_{su}(G)$ of real symmetric matrices whose zero-nonzero pattern of the off-diagonal entries is described by the graph: the $(i,j)$-entry (for $i \neq j$) is nonzero if and only if $\{i,j\}$ is an edge of $G$. The zero-nonzero pattern of the diagonal entries is free. Given a sequence $[\mu_1, ..., \mu_n]$ of non increasing real numbers, we ask whether there exists a matrix in the set $\mathcal{Q}_{su}(G)$ whose spectrum is $[\mu_1, ..., \mu_n].$ A first step to study this challenging problem is to compute the maximum possible multiplicity of an eigenvalue $\mu$ for a matrix in $\mathcal{Q}_{su}(G)$. This maximum is obtained thanks to the minimum rank of the graph $G$. Indeed, for any real number $\mu$, the maximum possible multiplicity of $\mu$ as an eigenvalue of a matrix in $\mathcal{Q}_{su}(G)$ is: $$|G| - mr(G),$$where $|G|$ denotes the number of vertices in $G$ (the vertex set of a graph is assumed to be finite) and $mr(G)$ refers to the minimum rank of the graph, that is: $$mr(G) = \min\{\text{rank}(A) : A \in \mathcal{Q}_{su}(G)\}.$$

Since the zero-nonzero pattern of the diagonal entries for a matrix in $\mathcal{Q}_{su}(G)$ is free, this maximum multiplicity does not depend on $\mu$.

The minimum rank of a graph is also useful in other problems like the singular graphs or the biclique partition of the edges of a graph (see \cite{fallat} for a brief description and interesting references).

Originally given for simple undirected graphs, the definition of the minimum rank of a graph has then been extended to the case of simple directed graphs and loop (undirected and directed) graphs, which are graphs allowing loops.

A \textbf{simple} (undirected or directed) graph is a graph that prohibits loops on its vertices whereas a \textbf{loop} (undirected or directed) graph is a graph that allows loops on its vertices.

In order to study the minimum rank of a graph, many graph invariants have been introduced \cite{aim1,barioli,fallat,hogben1,hogben2} . One of them is the zero forcing number, presented below.

The zero forcing number of a graph was originally defined in \cite{aim1} for simple undirected graphs and extended to loop directed graphs in \cite{barioli}. The definitions of zero forcing number for simple directed graphs and loop undirected graphs were then given in \cite{hogben2}.

Any undirected graph can be seen as a directed graph: any edge $\{i,j\}$ is split into directed edges $(i,j)$ and $(j,i)$. Consequently, as stated in \cite{hogben2}, the zero forcing number of any undirected graph is equal to the zero forcing number of its associated directed graph. Therefore, we will focus w.l.o.g. on the zero forcing number of directed graphs.

The definitions of zero forcing number and zero forcing set depend on a color change rule on the graph. This rule is slightly different in a loop directed graph or in a simple directed graph.

The \textbf{color change rule} in a \textit{loop} directed graph \cite{barioli,hogben2} is the following: suppose that any vertex of $G$ is either black or white. If exactly one out-neighbor $j$ of vertex $i$ is white (possibly $j=i$), then change the color of vertex $j$ to black.

In a \textit{simple} directed graph $G$, the color change rule is defined as follows: suppose any vertex of $G$ to be black or white. If vertex $i$ is black and vertex $j$ is the only white out-neighbor of $i$, then change the color of $j$ to black.

Hence, in a simple directed graph vertex $i$ must be black to be able to change the color of one of its out-neighbors, which is not the case in a loop directed graph.

 The color change rule is repeatedly applied to each vertex of $G$ until no more color change is possible. The  \textbf{zero forcing number} $Z(G)$ of graph $G$ is defined as the minimum number of vertices which have to be initially black so that after applying repeatedly the color change rule to $G$ all the vertices of $G$ are black. A vertex subset $S$ of $G$ with the property that if only the vertices of $S$ are initially black in $G$, then the whole graph is black after applying repeatedly the color change rule is called a \textbf{zero forcing set} of $G$. A zero forcing set of size $Z(G)$ is called a \textbf{minimum zero forcing set}.

This graph invariant allows us to compute a lower bound for the minimum rank of the graph. The definition of the minimum rank of a graph relies on the matrix family defined by the graph. This family of matrices depends on the type of the graph:

\begin{enumerate}[-]
\item if $G = (V,E)$ is a simple undirected graph, $$\mathcal{Q}_{su}(G) = \{ A \in \mathbb{R}^{|G| \times |G|} : A^T = A,  \text{for any }  i \neq j, a_{ij} \neq 0 \Leftrightarrow \{i,j\} \in E\},$$ 
\item if $G = (V,E)$ is a simple directed graph, $$\mathcal{Q}_{sd}(G) = \{ A \in \mathbb{R}^{|G| \times |G|} :\text{for any }  i \neq j, a_{ij} \neq 0 \Leftrightarrow (j,i) \in E\},$$ 
\item if $G = (V,E)$ is a loop undirected graph, $$\mathcal{Q}_{lu}(G) = \{ A \in \mathbb{R}^{|G| \times |G|} : A^T = A, \text{for any }  i,j, a_{ij} \neq 0 \Leftrightarrow \{i,j\} \in E\},$$ 
\item if $G = (V,E)$ is a loop directed graph, $$\mathcal{Q}_{ld}(G) = \{ A \in \mathbb{R}^{|G| \times |G|} : \text{for any }  i,j, a_{ij} \neq 0 \Leftrightarrow (j,i) \in E\}.$$ 
\end{enumerate} 

\textbf{Warning}: following \cite{monsh}, the matrix sets $\mathcal{Q}_{sd}(G)$ and $\mathcal{Q}_{ld}(G)$ have matrices transposed from those in the minimum rank literature \cite{barioli, hogben2}.

\medskip

The minimum rank of a graph $G$ of type $i$ with $i =$ su, sd, lu or ld is then the minimum possible rank for a matrix in $\mathcal{Q}_{i}(G)$, that is: $$mr(G) = \min\{ \text{rank}(A) : A \in \mathcal{Q}_{i}(G)\}.$$
It was proved in \cite{hogben2} that for any graph $G$, $$|G|-Z(G) \leq mr(G).$$

Moreover, if $G$ is a tree of any type (simple or allowing loops, undirected or directed), then equality holds \cite{hogben2}:  
\begin{equation}
\label{eq1}
|G|-Z(G) = mr(G).
\end{equation}

Because of the different color change rules, computing the zero forcing number of a simple (un)directed graph can \textit{not} be reduced to the computation of the zero forcing number of a loop directed graph and vice-versa. As a first result, we highlight the NP-hardness of the computation of the zero forcing number of a \textit{loop directed} graph. This completes the NP-hardness result proved by Aazami \cite{aazami} stating that computing the zero forcing number of a \textit{simple undirected} graph is NP-hard.

In the rest of the paper, we study the strong controllability of networked systems, through the zero forcing sets.

A networked system is a linear system whose dynamics is driven by a matrix $A$ with a structure provided by a graph, called the interconnection graph or the underlying graph.

Classical controllability of a networked system has been first considered in \cite{tanner} when $A$ is the Laplacian matrix of the underlying graph. This work has then continued in \cite{egerstedt, rahmani, zhang} when $A$ is the Laplacian matrix and in \cite{godsil} when $A$ is the adjacency matrix.

Classical controllability of networked systems was studied from graph theoretic arguments in \cite{camlibel, godsil, martini, rahmani, tanner, yazi, zhang}.

When $A$ is the Laplacian matrix, the minimum number of vertices that must be directly controlled by the outside controller in order to control the whole system has been studied for special graphs in \cite{nabi, parlangeli,rahmani,zhang}. The vertices directly controlled by the outside controller are called the \textbf{input set}.

Recently, zero forcing sets have been used in order to study the controllability of quantum systems \cite{burgarth3, burgarth, burgarth2, burgarth4}.

However, when the weights on the edges of the graph underlying the networked system are unknown or only partially determined, the classical Kalman condition for controllability can not be used. That is why structural controllability has been introduced.

The problem of determining a control strategy for a network of interconnected systems without exact knowledge of the interaction strengths along the edges has seen a surge of activity in the last decade, in particular regarding weak and strong structural controllability introduced in the 70's \cite{lin, mayeda}. Structural controllability takes into account only the structure of the interconnection graph, but not the interaction strengths on the edges. A system with a given interconnection graph is weakly structurally controllable from an input set $S$ if we can choose interaction strengths making the system controllable from $S$. Instead, a system with a given interconnection graph is strongly structurally controllable, or strongly controllable for short, from an input set $S$ if whatever the interaction strengths, the system is controllable from $S$.

Recently, the notion of matching has proved useful in the study of weak and strong structural controllability of a networked system. In particular, the constrained matchings in a bipartite graph are of interest.

The notion of constrained matching in a bipartite graph was originally defined in the paper of Hershkowitz and Schneider \cite{hershkowitz}. A \textbf{$t$-matching} is a set of $t$ edges such that no two edges share a vertex. Given a matching, a vertex of the graph that belongs to an edge of the matching is called a \textbf{matched vertex}, otherwise it is an \textbf{unmatched vertex}. A $t$-matching is called \textbf{constrained} if there is no other $t$-matching with the same matched vertices. A \textbf{maximum} (constrained) $t$-matching is a (constrained) $t$-matching such that there is no (constrained) $s$-matching with $s > t$.

Given a loop directed graph $G$, a bipartite graph $B_G = (V,V',E)$ was defined in \cite{hershkowitz} as follows: the sets $V$ and $V'$ are two copies of the vertex set of $G$. To avoid ambiguity, the vertices in $V$ are denoted by $i_1, ..., i_n$ and the vertices in $V'$ are denoted by $j_1, ..., j_n$. Given any vertex $i_s \in V$ and any vertex $j_t \in V'$, $\{i_s,j_t\}$ is an edge in $B_G$ if and only if there is an edge from vertex $j_t$ to vertex $i_s$ in $G$. Then $B_G$ is called the \textbf{bipartite graph associated with $G$}.
This paper is based on a one-to-one correspondance between the zero forcing sets in a loop directed graph $G$ and the constrained matchings in the bipartite graph associated with $G$. 

The minimum-size input sets from which a system is weakly structurally controllable are provided by the maximum matchings in the interconnection graph \cite{liu}, whereas the minimum-size input sets from which a system is strongly controllable are provided by some maximum constrained matchings in the bipartite graph associated with the interconnection graph \cite{chapman, olesky}.

As in \cite{chapman, liu, olesky}, we suppose the graph underlying the networked system to be a loop directed graph.

In this paper, we shed a new light on the strong controllability from the notion of zero forcing set. To our best knowledge, the equivalence between the zero forcing sets in a loop directed graph \cite{barioli, fallat, hogben2} and the constrained matchings in its associated bipartite graph \cite{chapman, golumbic, hershkowitz, mishra, olesky} has gone unnoticed. In this paper, we emphasize this equivalence and apply it in the study of the strong controllability of networked systems. Firstly, we show that testing if a system is strongly controllable from an input set $S$ is equivalent to checking if $S$ is a zero forcing set in the interconnection graph. Secondly, in the case of systems that are self-damped (i.e. the state of each vertex is influenced, among others, by itself), we show that minimum-size input sets for strong controllability are provided by the minimum zero forcing sets in the \textit{simple} interconnection graph, which is the interconnection graph without its loops. In particular,  we show that one can find in polynomial time a minimum-size input set for the strong controllability of a self-damped system with a tree structure.

The outline of this paper is as follows: in Section 2 we highlight that the computation of the zero forcing number of any loop directed graph is NP-hard. Section 3 is a statement of the strong controllability problem. Section 4 presents the main results of \cite{chapman} linking strong controllability and constrained matchings in the bipartite graph associated with the interconnection graph. Section 5 contains our results about strong controllability and zero forcing sets. In Section 6 we show how our results from the previous section differ from the related results appeared in \cite{monsh}. Section 7 contains concluding remarks.

\section{NP-hardness}

In his PhD thesis \cite{aazami}, Aazami has proved the NP-hardness of the computation of the zero forcing number in a \textit{simple undirected} graph, using a reduction from the Directed Hamiltonian Cycle problem known to be NP-hard \cite{karp}.

In this section, we complete this NP-hardness result for zero forcing by showing that the calculation of the zero forcing number of any \textit{loop directed} graph is also NP-hard. We use a result of \cite{golumbic} claiming that computing the size of a maximum constrained matching in a bipartite graph is NP-hard.

Recall that because of the different color change rules, computing the zero forcing number of a simple (un)directed graph can \textit{not} be reduced to the computation of the zero forcing number of a loop directed graph and vice versa.

\begin{defi}
Two matrices $A$ and $B$ are said to be \textbf{permutation similar} if there are permutation matrices $P_1, P_2$ such that $A = P_1BP_2$.
\end{defi}

A \textbf{zero-nonzero pattern} $\textbf{A}$ (or pattern for short) of dimension $n$ is an $n \times n$ matrix with each entry being either a star $\star$ or zero. A star $\star$ refers to a nonzero entry. 

A loop directed graph $G$ with $n$ vertices defines a zero-nonzero pattern $\textbf{A}(G)$ of  dimension $n$ as follows: the entry $a_{ij}$ of $\textbf{A}(G)$ is a star $\star$ if and only if there is a directed edge from  vertex $j$ to vertex $i$ in $G$. This pattern is called the \textbf{zero-nonzero pattern associated with $G$}.

\newpage
\begin{defi}\cite{barioli}
Let $\textbf{A}$ be a zero-nonzero pattern.  
\begin{enumerate}[-]

\item A \textbf{$t$-triangle} of $\textbf{A}$ is a $t \times t$ subpattern of $\textbf{A}$ which is permutation similar to an upper triangular pattern whose all diagonal entries are nonzero. 
\item The \textbf{triangle number of $\textbf{A}$} is the maximum size of a triangle in $\textbf{A}$. 
\item The \textbf{triangle number $tri(G)$ of a loop directed graph $G$} is the triangle number of its associated zero-nonzero pattern $\textbf{A}(G)$.
\end{enumerate}
\end{defi}

\begin{thm}\cite{barioli}
\label{thm03}
For any loop directed graph $G$, $tri(G) + Z(G) = |G|.$
\end{thm}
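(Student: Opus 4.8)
The plan is to prove the two inequalities $tri(G) \ge |G| - Z(G)$ and $Z(G) \le |G| - tri(G)$ separately, each by an explicit construction that translates a zero forcing process into a triangular subpattern and back. The bridge between the two notions is the \emph{chronological list of forces}: when the color change rule is run to completion from an initially black set $S$, every vertex outside $S$ is turned black by exactly one application of the rule, and, since a vertex whose unique white out-neighbor it has just forced then has zero white out-neighbors (and out-neighbors never turn white again), every vertex performs at most one force. Thus a minimum zero forcing set $S$ with $|S| = Z(G)$ produces an ordered list of $t := |G| - Z(G)$ forces $(i_1 \to f_1), \ldots, (i_t \to f_t)$, listed in the order they are carried out, with the $f_k$ pairwise distinct and the $i_k$ pairwise distinct.

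For the first inequality I would consider the $t \times t$ subpattern of $\textbf{A}(G)$ with row set $\{f_1, \ldots, f_t\}$ and column set $\{i_1, \ldots, i_t\}$, both ordered by the time of the corresponding force. Its diagonal entry at position $k$ is $a_{f_k, i_k}$, which is a star because the force $(i_k \to f_k)$ uses the edge $i_k \to f_k$ of $G$. The crucial step is to check that every entry strictly below the diagonal vanishes, i.e. $a_{f_k, i_l} = 0$ whenever $k > l$: if there were an edge $i_l \to f_k$, then $f_k$ would be an out-neighbor of $i_l$, but at the moment force $l$ is performed the unique white out-neighbor of $i_l$ is $f_l \ne f_k$, whereas $f_k$ is still white at that moment (it is only forced later, at time $k > l$) --- a contradiction. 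Hence the subpattern is permutation similar to an upper triangular pattern with nonzero diagonal, so it is a $t$-triangle and $tri(G) \ge t = |G| - Z(G)$.

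For the reverse inequality I would start from a maximum triangle, i.e. a $t$-triangle with $t = tri(G)$, given by rows $r_1, \ldots, r_t$ and columns $c_1, \ldots, c_t$ ordered so that each $a_{r_k, c_k}$ is a star and $a_{r_k, c_l} = 0$ for $k > l$. I claim that $S := V \setminus \{r_1, \ldots, r_t\}$, of size $|G| - t$, is a zero forcing set, which I would verify by forcing the vertices $r_1, r_2, \ldots, r_t$ in this order. Suppose $r_1, \ldots, r_{l-1}$ are already black; the out-neighbors of $c_l$ lying in $\{r_1, \ldots, r_t\}$ are exactly those $r_k$ with $a_{r_k, c_l}$ a star, and by upper triangularity these satisfy $k \le l$, so the only one possibly still white is $r_l$ (which is indeed an out-neighbor, as $a_{r_l, c_l}$ is a star). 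Every other out-neighbor of $c_l$ lies in $S$ and is black from the start, so $r_l$ is the unique white out-neighbor of $c_l$ and the rule forces it --- here I use that in a loop directed graph the forcing vertex $c_l$ itself need not be black. By induction all of $r_1, \ldots, r_t$ become black, so $S$ is a zero forcing set and $Z(G) \le |G| - t = |G| - tri(G)$.

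Combining the two inequalities yields $tri(G) + Z(G) = |G|$. I expect the main obstacle to be the vanishing of the below-diagonal entries in the first direction: this is exactly where the ``unique white out-neighbor'' semantics of the loop directed color change rule, together with the chronological ordering of the forces, must be used carefully. The reverse direction is then a matter of running the same argument forwards along the triangular order and invoking triangularity to guarantee uniqueness of the white out-neighbor at each step.
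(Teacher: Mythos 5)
Your proof is correct, but note that the paper itself gives no proof of this statement: Theorem \ref{thm03} is imported from \cite{barioli} as a known result. Both of your directions are sound. In the first, the key facts --- each vertex is forced exactly once, each vertex forces at most once, and a forcing vertex can have no other white out-neighbor at the moment it forces --- are exactly what is needed to make the below-diagonal entries $a_{f_k,i_l}$ ($k>l$) vanish, and you handle the transpose convention ($a_{ij}\neq 0$ iff $(j,i)$ is an edge) consistently. In the second, you correctly exploit that in a \emph{loop} directed graph the forcing vertex $c_l$ need not be black, which is precisely why the construction works there and would fail under the simple-graph rule. It is worth observing that your argument is, in substance, the composition of two equivalences that the paper does establish elsewhere: Theorem \ref{thm34} (zero forcing sets with a chronological list of forces correspond to constrained matchings in the bipartite graph $B_G$) and the Hershkowitz--Schneider theorem of Section 2 (constrained matchings of $B_{\textbf{A}}$ correspond to triangles of $\textbf{A}$), which together yield $Z(G) = |G| - tri(G)$. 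Your proof inlines this chain, cutting out the bipartite-graph intermediary: the chronological order of forces plays the role of the vertex ordering in Theorem \ref{thm2}, and the triangular subpattern plays the role of the constrained matching. What this buys is a self-contained, purely graph-theoretic proof of the cited result; what the paper's route buys is the matching formulation itself, which is what its controllability results actually need.
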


The following theorem proved in \cite{hershkowitz} shows the link between the constrained matchings in a bipartite graph and the triangle number.

A zero-nonzero pattern $\textbf{A}$ of dimension $n$ defines a bipartite graph $B_\textbf{A}$ whose vertex sets $V = \{i_1, ..., i_n\}$ and $V' = \{j_1, ..., j_n\}$ are two copies of $\{1, ..., n\}$ and $\{i_s,j_t\}$ is an edge in $B_\textbf{A}$ if and only if $(i_s,j_t)$-entry of \textbf{A} is a star $\star$. Then $B_\textbf{A}$ is called the \textbf{bipartite graph associated with $\textbf{A}$}.

Notice that the bipartite graph associated with a loop directed graph $G$ is equal to the bipartite graph associated with the pattern $\textbf{A}(G)$.

\begin{thm}\cite{hershkowitz}
Let $\textbf{A}$ be an $n \times n$ zero-nonzero pattern and $B_\textbf{A}$ the bipartite graph associated with $\textbf{A}$. Then the following statements are equivalent.
\begin{enumerate}[-]
\item $B_\textbf{A}$ has a constrained $n$-matching
\item $\textbf{A}$ is permutation similar to a triangular pattern with nonzero diagonal elements.
\end{enumerate}
\end{thm}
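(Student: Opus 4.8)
The plan is to work directly with the definition of permutation similarity recalled above, under which $\mathbf{A}$ is permutation similar to $U$ precisely when $\mathbf{A}=P_1UP_2$ for two permutation matrices $P_1,P_2$; concretely, this is the freedom to reorder the rows and the columns of $\mathbf{A}$ independently, i.e.\ to relabel the two colour classes $V$ and $V'$ of $B_\mathbf{A}$ independently. The first observation I would record is that, in this bipartite setting, a constrained $n$-matching is the same thing as a \emph{unique} perfect matching. An $n$-matching of $B_\mathbf{A}$ saturates all $n$ vertices of $V$ and all $n$ vertices of $V'$, so any two $n$-matchings automatically share the same (full) set of matched vertices; hence ``constrained'' collapses to ``no other $n$-matching exists at all''. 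I therefore reduce the claim to: $B_\mathbf{A}$ has exactly one perfect matching if and only if $\mathbf{A}$ is permutation similar to an upper triangular pattern with nonzero diagonal.

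For the easy implication, suppose $P_1\mathbf{A}P_2=U$ with $U$ upper triangular and every diagonal entry $U_{ss}$ nonzero. Relabelling the rows by $P_1$ and the columns by $P_2$ gives a graph isomorphism $B_\mathbf{A}\cong B_U$, so it suffices to count perfect matchings of $B_U$. Such a matching is a permutation $\sigma$ with $U_{s\sigma(s)}\neq 0$ for all $s$; upper triangularity forces $\sigma(s)\geq s$ for every $s$, whence $\sigma=\mathrm{id}$ by a one-line descending induction (or because $\sum_s\sigma(s)=\sum_s s$ with all terms $\sigma(s)\geq s$). Thus the diagonal is the one and only perfect matching of $B_U$, and hence of $B_\mathbf{A}$.

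The substantive direction proceeds in two stages. Starting from the unique perfect matching $M_0$ of $B_\mathbf{A}$, I first apply a single \emph{column} permutation $Q$ to place $M_0$ on the diagonal, setting $M=\mathbf{A}Q$; then $M_{ss}\neq 0$ for all $s$, the diagonal records $M_0$, and $M$ still has a unique perfect matching since $Q$ is a relabelling. Next I introduce the auxiliary digraph $D$ on $\{1,\dots,n\}$ with an arc $s\to t$ exactly when $s\neq t$ and $M_{st}\neq 0$, and I claim $D$ is acyclic. Indeed, a directed cycle $s_1\to s_2\to\cdots\to s_k\to s_1$ would yield a nonidentity permutation $\tau$ — the $k$-cycle on $s_1,\dots,s_k$, the identity elsewhere — with $M_{s\tau(s)}\neq 0$ for all $s$ (off the cycle from the nonzero diagonal, on the cycle from the arcs), giving a second perfect matching and contradicting uniqueness. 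Being acyclic, $D$ admits a topological order, i.e.\ a \emph{single} permutation $\pi$ with $s\to t$ in $D\Rightarrow\pi(s)<\pi(t)$; permuting the rows and columns of $M$ \emph{simultaneously} by the associated matrix $P$ pushes every off-diagonal nonzero strictly above the diagonal while fixing the nonzero diagonal, so $P^{T}MP=U$ is upper triangular with nonzero diagonal. Unwinding, $\mathbf{A}=MQ^{T}=P\,U\,P^{T}Q^{T}=P_1UP_2$ with $P_1=P$ and $P_2=P^{T}Q^{T}$, which is exactly permutation similarity in the sense of the definition above.

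The step I expect to be the main obstacle is the acyclicity argument, together with its correct execution at the level of permutations: one must see that once the chosen matching has been moved onto the diagonal, \emph{every} alternative perfect matching manifests as a directed cycle of $D$, and conversely that the resulting topological sort can be realised by the \emph{same} permutation on rows and columns, so that the diagonal matching survives the triangularisation. The remaining pieces — the two-line counting for the easy direction, and the bookkeeping that fuses the column permutation $Q$ with the simultaneous permutation $P$ into the pair $(P_1,P_2)=(P,\,P^{T}Q^{T})$ — are routine.
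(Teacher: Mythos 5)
Your proof is correct. Note first that the paper itself offers no proof of this statement: it is imported verbatim from the Hershkowitz--Schneider reference, so there is no in-paper argument to compare against; your write-up is a valid self-contained substitute. The two pillars of your argument hold up under scrutiny. The opening reduction is sound: since an $n$-matching saturates both colour classes, any two $n$-matchings trivially have the same matched vertex set, so ``constrained $n$-matching'' collapses to ``unique perfect matching'', and the easy direction then follows from the observation that upper triangularity forces $\sigma(s)\geq s$, hence $\sigma=\mathrm{id}$. In the substantive direction, your acyclicity claim is correctly executed: a directed cycle of length $k\geq 2$ in the digraph $D$ (loops are excluded by the condition $s\neq t$) yields a non-identity permutation $\tau$ with $M_{s\tau(s)}\neq 0$ everywhere, because $\tau$ fixes the complement of the cycle where the diagonal is nonzero --- this is exactly a second perfect matching, contradicting uniqueness. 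The bookkeeping also checks out: $A=MQ^{T}=PUP^{T}Q^{T}$ exhibits the pair $(P_1,P_2)=(P,P^{T}Q^{T})$ required by the paper's definition of permutation similarity (which, despite its name, permits independent row and column permutations, precisely the freedom your argument exploits). It is worth observing that your construction is essentially a special case of the ordering characterization the paper later quotes as Theorem~\ref{thm2}: your topological order $\pi$ is the ordering $i_1,\dots,i_n$, $j_1,\dots,j_n$ of that theorem specialized to perfect matchings, so your proof in effect re-derives the $t=n$ case of the Golumbic--Hirst--Lewenstein characterization and then reads off triangularity, whereas the paper treats both results as black boxes from the literature.
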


From the previous theorem, we deduce that the size of a maximum constrained matching in $B_\textbf{A}$ equals the triangle number of $\textbf{A}$. However, in \cite{golumbic} it has been proved that the computation of the size of a maximum constrained matching in a bipartite graph is NP-hard. Therefore, so it is for the triangle number of a loop directed graph.
From this result and Theorem \ref{thm03}, we have highlighted the NP-hardness of the computation of the zero forcing number of any loop directed graph.

\begin{thm}
The computation of the zero forcing number of any loop directed graph is NP-hard.
\end{thm}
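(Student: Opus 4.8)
The plan is to assemble the three facts already in hand into a polynomial-time reduction from the problem of computing a maximum constrained matching, whose NP-hardness is established in \cite{golumbic}, to the problem of computing the zero forcing number of a loop directed graph. By Theorem \ref{thm03}, every loop directed graph $G$ satisfies $Z(G) = |G| - tri(G)$, and by the theorem of \cite{hershkowitz} together with the remark that $B_{\textbf{A}(G)} = B_G$, the triangle number $tri(G)$ equals the size of a maximum constrained matching in $B_G$. Hence $Z(G)$ is determined by, and conversely determines (once $|G|$ is recorded), the maximum constrained matching number of $B_G$. So an oracle for the zero forcing number of loop directed graphs immediately produces one for the maximum constrained matching number on every bipartite graph that arises as some $B_G$.

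The step requiring care is to verify that the bipartite graphs underlying the hardness result of \cite{golumbic} can indeed be presented in the form $B_G$. The graph $B_G$ is \emph{balanced} by construction: its two shores $V = \{i_1,\dots,i_n\}$ and $V' = \{j_1,\dots,j_n\}$ are copies of the same vertex set. Conversely, any balanced bipartite graph $B = (V,V',E)$ with $|V| = |V'| = n$ is realizable as $B_G$: define the pattern $\textbf{A}$ by placing a star in position $(s,t)$ exactly when $\{i_s,j_t\} \in E$, and let $G$ be the loop directed graph whose associated pattern is $\textbf{A}$, i.e. put a directed edge from $t$ to $s$ precisely when the $(s,t)$-entry is a star. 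To handle an arbitrary instance whose two shores have unequal sizes, I would first pad the smaller shore with isolated vertices until both shores have size $n$ equal to the larger of the two. Isolated vertices lie on no edge, hence create no new $t$-matchings and appear in no set of matched vertices; therefore the family of $t$-matchings and the property of being constrained are untouched, and the maximum constrained matching number is preserved. This converts any instance into an equivalent balanced one, which is then realizable as some $B_G$.

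Finally I would phrase everything as a many-one reduction of decision problems. From a bipartite graph $B$ and a threshold $t$, the construction above yields in polynomial time a loop directed graph $G$ on $n$ vertices such that $B$ admits a constrained matching of size at least $t$ if and only if $tri(G) \ge t$, equivalently $Z(G) \le n - t$. Thus deciding whether $Z(G) \le n - t$ decides the constrained matching instance, which completes the reduction and establishes NP-hardness. The only genuine obstacle here is the realizability and balancing argument of the previous paragraph, which bridges the gap between the general bipartite graphs of \cite{golumbic} and the special shape of $B_G$; everything else is a direct substitution through the two identities $Z(G) = |G| - tri(G)$ and $tri(G) = $ (maximum constrained matching number of $B_G$).
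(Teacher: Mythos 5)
Your proposal is correct and follows essentially the same route as the paper: both chain Theorem \ref{thm03} (i.e. $Z(G) = |G| - tri(G)$) with the theorem of \cite{hershkowitz} (so that $tri(G)$ equals the maximum constrained matching number of $B_G$) and then invoke the NP-hardness of maximum constrained matching from \cite{golumbic}. Your explicit realizability-and-padding step, showing that every (possibly unbalanced) bipartite instance can be converted into an equivalent balanced one of the form $B_G$ without changing the maximum constrained matching number, is a point of rigor the paper passes over silently, and you handle it correctly.
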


\section{Strong controllability: problem formulation}

In this section, we define the strong controllability of a system underlying a loop directed graph and we present the classical questions regarding strong controllability we will consider in the rest of the paper.

\medskip

A loop directed graph $G = (V,E)$ on $n$ vertices defines the matrix set $$\mathcal{Q}_{ld}(G) = \{ A \in \mathbb{R}^{n \times n} : \text{ for any } i,j, a_{ij} \neq 0 \Leftrightarrow (j,i) \in E\}.$$ Given the vertex set $V = \{1, ..., n\}$ and a vertex subset $S = \{v_1, ..., v_m\} \subseteq V$, we define the $n \times m$ pattern $\textbf{B}(S)$ as $$[\textbf{B}(S)]_{ij} = \begin{cases} \star & \text{ if } i = v_j \\ 0 & \text{ otherwise.}\end{cases}$$ A realization of pattern $\textbf{B}(S)$ is an $n \times m$ matrix $B$ such that any $(i,j)$-entry in $B$ is nonzero if and only if the corresponding entry in $\textbf{B}(S)$ is a star $\star$. We write $B \in \textbf{B}(S)$.

A networked system whose underlying graph is $G$ is a linear system of the form: $$\dot{x}(t) = Ax(t) + Bu(t)$$ where $x(t) \in \mathbb{R}^n$ is the state vector, $u(t) \in \mathbb{R}^m$ is the outside controller, $A \in \mathcal{Q}_{ld}(G)$ and $B \in \textbf{B}(S)$ is a realization of $\textbf{B}(S)$, for some input set $S = \{v_1, ..., v_m\}$. Notice that only the vertices in $S$ are directly controlled by the outside controller. Such a system is referred to as system $(A,B)$ and is said to be controllable if the controllability matrix $$C = \left[\begin{array}{ccccc} B & AB & A^2B & ... & A^{n-1}B \end{array}\right]$$ has full rank. Concretely, a system $(A,B)$ is controllable if it can be driven from any initial state $x_0 \in \mathbb{R}^n$ to any final state $x_f \in \mathbb{R}^n$ in finite time.

\medskip

Given the loop directed graph $G$, the zero-nonzero pattern $\textbf{A} = \textbf{A}(G)$ associated with $G$ is the pattern of the matrices in $\mathcal{Q}_{ld}(G)$. Therefore, any matrix $A$ of $\mathcal{Q}_{ld}(G)$ is a realization of $\textbf{A}$. We write $A \in \textbf{A}$.

\medskip

Given an input set $S$, any system $(A,B)$ whose underlying graph is $G$ is a realization of the pair  $(\textbf{A}, \textbf{B}(S))$, meaning that $A \in \textbf{A}$ and $B \in \textbf{B}(S)$. We write $(A,B) \in (\textbf{A}, \textbf{B}(S))$. This class of systems is referred to as system $(\textbf{A}, \textbf{B}(S))$.

\medskip

Given the system $(\textbf{A}, \textbf{B}(S))$, the strong controllability seeks to know if any system $(A,B) \in (\textbf{A}, \textbf{B}(S))$ is controllable.

\begin{defi}
The system $(\textbf{A}, \textbf{B}(S))$ is strongly $S$-controllable if all systems $(A,B) \in (\textbf{A}, \textbf{B}(S))$ are controllable.
\end{defi}

In the rest of the paper, we tackle the following questions:
\begin{quest}
Given an input set $S$, is the system $(\textbf{A}, \textbf{B}(S))$ strongly $S$-controllable ?
\end{quest}
\begin{quest}
What is the minimum size of an input set $S$ making the system $(\textbf{A}, \textbf{B}(S))$ strongly $S$-controllable ? Can we find efficiently such a minimum-size input set ?
\end{quest}

Strong controllability of a networked system $(\textbf{A}, \textbf{B}(S))$ has been characterized in \cite{chapman} from the constrained matchings in the bipartite graph $B_G$ associated with the loop directed graph $G$. In the next section, we present the main results of \cite{chapman}.

\section{Strong controllability and constrained matchings}

In this section, we present the results of \cite{chapman} connecting the strong controllability of a networked system with the constrained matchings in the bipartite graph associated with the underlying loop directed graph.

\medskip

In Section 2 a bipartite graph associated with a pattern of dimension $n$ was defined. This can be extended to rectangular patterns. Let $\textbf{A}$ be an $n \times m$ zero-nonzero pattern. The bipartite graph $B_\textbf{A}$ associated with $\textbf{A}$ has vertex sets $V = \{1, ..., n\}$ and $V' = \{1, ..., m\}$. To avoid ambiguity, the elements of $V$ are denoted by $\{i_1, ..., i_n\}$ and the ones of $V'$ are denoted by $\{j_1, ..., j_m\}$. Besides, $\{i_s,j_t\}$ is an edge in $B_\textbf{A}$ if and only if $(i_s,j_t)$-entry of $\textbf{A}$ is a star $\star$.

\medskip

If the bipartite graph associated with a pattern $\textbf{A}$ has a constrained $t$-matching, then we say by abuse of language that $\textbf{A}$ has a constrained $t$-matching.

\medskip

Let $S \subseteq V$ be a vertex subset in a loop directed graph $G$. A constrained \textit{$S$-less} matching in the bipartite graph $B_G$ associated with $G$ is a constrained matching with no edges of the form $\{i,i\}$ with $i \in S$. In particular, if $S= V$, a constrained $S$-less matching in $B_G$ is called a constrained \textit{self-less} matching.

\medskip

Let $\textbf{A}$ be an $n \times m$ pattern and $S \subseteq \{1, ..., n\}$. Then $\textbf{A}(S|.)$ denotes the pattern obtained from $\textbf{A}$ by deleting the rows indexed by $S$.

\medskip

Let $\textbf{A}$ be a pattern of dimension $n$. Then $\textbf{A}_\times$ is the pattern obtained from $\textbf{A}$ by putting stars $\star$ along the diagonal. Similarly, $G_\times$ denotes the graph obtained from graph $G$ by putting a loop on each vertex of $G$.

\medskip

Throughout, $V_{loop}$ is the set of vertices with a loop in the original loop directed graph $G$ underlying the system.

\medskip

Here is the theorem from \cite{chapman} characterizing the strong controllability of a system from the constrained matchings in the bipartite graph associated with the underlying loop directed graph.

\medskip

\begin{thm}\cite{chapman}
\label{thm42}
Let $G$ be a loop directed graph on $n$ vertices with pattern $\textbf{A}$ and $S$ be an input set with cardinality $m \leq n$. System $(\textbf{A}, \textbf{B}(S))$
 is strongly $S$-controllable if and only if $\textbf{A}(S|.)$ has a constrained $(n-m)$-matching and $\textbf{A}_\times(S|.)$ has a constrained $V_{loop}$-less $(n-m)$-matching. 
 \end{thm}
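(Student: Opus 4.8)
The plan is to reduce strong $S$-controllability to a purely structural full-rank condition by means of the Popov--Belevitch--Hautus (PBH) test, and then to translate that condition into the two constrained-matching requirements by distinguishing the eigenvalue $\lambda = 0$ from $\lambda \neq 0$. First I would invoke the PBH criterion: a pair $(A,B)$ is controllable if and only if $[\,A - \lambda I \mid B\,]$ has full row rank $n$ for every $\lambda \in \mathbb{C}$. Since each column of $B$ is a standard basis vector $e_{v_j}$ with $v_j \in S$, any left null vector $y$ of $[\,A-\lambda I \mid B\,]$ must vanish on $S$ and satisfy $y^{T}(A - \lambda I) = 0$; hence the rank is full exactly when the rows of $A - \lambda I$ indexed by $V \setminus S$ are independent, i.e. when $(A - \lambda I)(S|.)$ has full row rank $n-m$. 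Strong $S$-controllability is therefore equivalent to requiring that for every realization $A \in \textbf{A}$ and every $\lambda \in \mathbb{C}$ the $(n-m)\times n$ matrix $(A - \lambda I)(S|.)$ has full row rank.

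The core ingredient I would isolate as a lemma is a strong, rectangular analogue of the Hershkowitz--Schneider theorem: a pattern with at most as many rows as columns has all its realizations of full row rank if and only if its associated bipartite graph carries a constrained matching saturating every row-vertex. The ``if'' direction follows by restricting to the matched columns, where the induced square subpattern is, by Hershkowitz--Schneider, permutation similar to a triangular pattern with nonzero diagonal and hence nonsingular for every realization; the ``only if'' direction is the contrapositive, producing a rank-dropping realization whenever no such matching exists. With this lemma I would split on $\lambda$. For $\lambda = 0$ the relevant pattern is exactly $\textbf{A}(S|.)$, giving the first condition. For $\lambda \neq 0$ the diagonal of $A - \lambda I$ is $a_{ii} - \lambda$: at a vertex $i \notin V_{loop}$ it equals $-\lambda \neq 0$ and is a reliable star, whereas at $i \in V_{loop}$ it can be annihilated by taking $\lambda = a_{ii}$. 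The effective pattern is thus $\textbf{A}_\times(S|.)$ with the $V_{loop}$ diagonal entries treated as possibly zero, and the lemma then yields the second condition, a constrained $V_{loop}$-less $(n-m)$-matching.

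The main obstacle I anticipate lies in the bookkeeping for $\lambda \neq 0$. One must argue that the loop-diagonal stars are genuinely unreliable---that a single scalar $\lambda$ can annihilate all of them simultaneously, which is possible because the weights $a_{ii}$ are free and may be chosen equal---while the added non-loop diagonal stars remain nonzero; this is exactly what turns ``full row rank for all $\lambda \neq 0$'' into the $V_{loop}$-less requirement rather than an unrestricted one. A secondary point is the interplay between the real entries of $A$ and complex $\lambda$: a generic complex $\lambda$ keeps every diagonal entry nonzero and merely reproduces the pattern $\textbf{A}_\times$, so the binding constraints come from the real values $\lambda = a_{ii}$, confirming that the two extracted conditions together are precisely equivalent to the PBH rank condition holding for all $\lambda$.
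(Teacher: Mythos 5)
Note first that the paper itself offers no proof of Theorem \ref{thm42}: it is quoted from \cite{chapman}, so your attempt has to stand on its own. Your overall strategy is the natural one and its first half is sound: the PBH reduction is correct (since the columns of $B$ are nonzero multiples of $e_{v_j}$, $v_j \in S$, full rank of $[\,A-\lambda I \mid B\,]$ is exactly full row rank of $(A-\lambda I)(S|.)$), and the $\lambda = 0$ case is handled correctly, because $\{A(S|.) : A \in \textbf{A}\}$ is precisely the realization class of the fixed zero-nonzero pattern $\textbf{A}(S|.)$, to which a rectangular Hershkowitz--Schneider lemma \cite{hershkowitz} legitimately applies.

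The genuine gap is in the $\lambda \neq 0$ case, in the necessity direction. Your lemma is a statement about fixed zero-nonzero patterns, but the matrices $(A-\lambda I)(S|.)$ with $A \in \textbf{A}$, $\lambda \neq 0$ are not the realization class of any such pattern: the diagonal entries at $V_{loop}$ are $a_{ii}-\lambda$, hence neither reliably zero nor reliably nonzero, the diagonal entries off $V_{loop}$ are all tied to the single value $-\lambda$, and the loop-diagonal values can never equal $-\lambda$. You acknowledge this, but your remedy --- choose all $a_{ii}$ equal so that one scalar $\lambda$ annihilates every loop-diagonal entry --- only shows that the pattern obtained from $\textbf{A}_\times(S|.)$ by \emph{zeroing} the $V_{loop}$-diagonal is realized by the family (up to a column-scaling argument you would also need to supply, to decouple the common value $-\lambda$). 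Applying your lemma to that zeroed pattern yields a matching of size $n-m$ avoiding the loop-diagonal edges which is constrained \emph{in the bipartite graph from which those edges have been deleted}. That is strictly weaker than the theorem's condition: by the paper's definition, a constrained $V_{loop}$-less matching in $\textbf{A}_\times(S|.)$ must be constrained in the bipartite graph of $\textbf{A}_\times(S|.)$ itself, i.e.\ it must admit no alternative matching even among matchings that \emph{use} the loop-diagonal edges. Ruling out those alternatives requires exploiting realizations in which the loop-diagonal entries are nonzero --- exactly the freedom your lemma cannot see --- so the second condition, as stated, is not established; you would need either a lemma for patterns with ``free'' entries whose conclusion is constrainedness in the larger graph, or a direct construction of a singular pair $(A,\lambda)$ whenever no such matching exists. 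This is not a pedantic distinction: the interplay between ``constrained'' and which diagonal edges are admissible is precisely where Corollary \ref{cor81} of \cite{chapman} goes wrong (see Appendix A). A smaller instance of the same mismatch affects your sufficiency direction: $(A-\lambda I)$ is not a realization of $\textbf{A}_\times$, so ``nonsingular for every realization'' does not literally apply; there the fix is easy (support containment in $\textbf{A}_\times$, plus the Golumbic ordering of Theorem \ref{thm2}, plus nonzero matched entries from $V_{loop}$-lessness and $\lambda \neq 0$), but it should be spelled out rather than absorbed into the lemma.
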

 
Given an input set $S$, in order to check whether or not a system is strongly $S$-controllable, a $\mathcal{O}(n^2)$ algorithm was presented in \cite{chapman}. If the system is not strongly $S$-controllable, the algorithm computes a vertex set $\tilde{S}$  containing $S$ such that the system is strongly $\tilde{S}$-controllable. However, computing a minimum-size input set for strong controllability is a challenging problem. In that scope, the sufficient condition of the following theorem has been proved in \cite{chapman}. We deduce the necessary condition from Theorem \ref{thm42}.

\medskip

A \textit{self-damped} system is a system whose underlying graph has a loop on each vertex (the state of each vertex influences itself).

\medskip

A maximum constrained self-less $t$-matching is a constrained self-less $t$-matching such that there is no constrained self-less $s$-matching with $s > t$.

\begin{thm}\cite{chapman}
\label{thm43}
Consider a loop directed graph $G$ on $n$ vertices with pattern $\textbf{A}$ underlying a self-damped\footnote{In \cite{chapman}, the sufficient condition of this result was also stated for undamped systems, which are systems where the state of each vertex is influenced by the state of some other vertices but not itself. However, in Appendix A, we show that in an undamped system, the input set $S$ provided by the maximum constrained self-less matching may not be of minimum size, in contradiction with Corollary 11 of \cite{chapman}.} system. A vertex subset $S \subseteq V$ is a (minimum-size) input set for strong controllability of the system if and only if there is a (maximum) constrained self-less matching in $\textbf{A}_\times$ such that in the bipartite graph $(V,V',E)$ associated with $\textbf{A}_\times$, the unmatched vertex set of $V$ is $S$. 
\end{thm}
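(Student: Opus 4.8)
The plan is to reduce the statement to Theorem \ref{thm42} by exploiting the two simplifications available for a self-damped system: since every vertex carries a loop, we have $\textbf{A} = \textbf{A}_\times$ and $V_{loop} = V$. Under these identities, the two conditions of Theorem \ref{thm42} evaluated for an input set $S$ of size $m$ become (i) $\textbf{A}(S|.)$ has a constrained $(n-m)$-matching, and (ii) $\textbf{A}(S|.)$ has a constrained self-less $(n-m)$-matching. The first observation I would record is that in this self-damped setting condition (ii) is the stronger of the two: a constrained self-less $(n-m)$-matching is in particular a constrained $(n-m)$-matching, so (ii) implies (i). Combined with Theorem \ref{thm42}, this shows that strong $S$-controllability is equivalent to condition (ii) alone.

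The central technical step is a transfer lemma between matchings of the row-deleted pattern $\textbf{A}(S|.)$ and matchings of the full pattern $\textbf{A}_\times = \textbf{A}$. I would show that a $(n-m)$-matching $M$ in the bipartite graph of $\textbf{A}(S|.)$ corresponds exactly to a $(n-m)$-matching of $\textbf{A}_\times$ whose unmatched vertex set of $V$ is $S$: since $\textbf{A}(S|.)$ has precisely the $n-m$ rows outside $S$, such an $M$ matches all of them and, read inside $\textbf{A}_\times$, leaves exactly the rows of $S$ unmatched. The delicate point is that the \emph{constrained} property is preserved in both directions. Here I would argue that any competing matching sharing the same matched-vertex set must also leave the rows of $S$ unmatched, hence cannot use any deleted row, and therefore lives entirely inside $\textbf{A}(S|.)$; consequently $M$ admits a competitor in $\textbf{A}_\times$ if and only if it admits one in $\textbf{A}(S|.)$, so the two notions of constrainedness coincide. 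The self-less property transfers trivially. Together with the previous paragraph this yields the unparenthesized equivalence: $S$ is an input set for strong controllability if and only if $\textbf{A}_\times$ carries a constrained self-less matching whose unmatched $V$-set equals $S$.

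For the minimum-size (parenthesized) statement I would proceed through the size identity: a constrained self-less matching with unmatched $V$-set $S$ has size exactly $n - |S|$, so minimizing $|S|$ over admissible input sets is the same as maximizing the size of the associated constrained self-less matching. Concretely, for the necessity direction, given a minimum-size input set $S$ the matching $M$ produced above has size $n - |S|$; were there a strictly larger constrained self-less matching, its unmatched $V$-set $S'$ would satisfy $|S'| < |S|$ and, by the unparenthesized equivalence, would itself be an input set, contradicting the minimality of $S$. Hence $M$ is a maximum constrained self-less matching. The converse implication is exactly the sufficient condition already established in \cite{chapman}.

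I expect the only genuinely delicate point to be the constrainedness transfer in the second paragraph: the definition quantifies over all competing matchings of the same size with the same matched vertices, so one must verify explicitly that passing between $\textbf{A}(S|.)$ and $\textbf{A}_\times$ neither creates nor destroys such competitors. Everything else is routine bookkeeping with the size relation $|S| = n - t$.
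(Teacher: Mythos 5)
Your proof is correct and takes essentially the same route as the paper: the paper's entire proof is the observation that for a self-damped system $\textbf{A} = \textbf{A}_\times$ (and $V_{loop} = V$), so that the statement is read off from Theorem \ref{thm42}. Your transfer lemma identifying constrained self-less $(n-m)$-matchings of $\textbf{A}_\times(S|.)$ with constrained self-less matchings of $\textbf{A}_\times$ whose unmatched $V$-set is $S$, together with the size bookkeeping $|S| = n - t$ for the minimum-size claim, is precisely the detail the paper leaves implicit in that one-line deduction.
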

\begin{proof}
Since patterns $\textbf{A}$ and $\textbf{A}_\times$ are equal, the result is deduced from Theorem \ref{thm42}.
\end{proof}

This theorem provides a way to obtain a minimum-size input set for strong controllability in the case of a self-damped system. However, computing a maximum constrained self-less matching in a bipartite graph is a challenging problem.

\medskip

In the next section, we re-state these results in terms of zero forcing sets. On the one hand, these new statements show that testing whether or not a system is strongly $S$-controllable is equivalent to testing if $S$ is a zero forcing set in a loop directed graph. On the other hand, they show that the zero forcing sets in a simple graph provide input sets for strong controllability of a self-damped system. In particular, this result together with the existing algorithms on zero forcing provide a way to select in polynomial time a minimum-size input set for a self-damped system with a tree-structure.

\section{Strong controllability and zero forcing sets}

While Theorems \ref{thm42} and \ref{thm43} provide criteria for strong controllability in terms of constrained matchings, we provide in this section equivalent criteria in terms of zero forcing sets. On the one hand, these new statements show that given an input set $S$, testing whether or not the system is strongly $S$-controllable is equivalent to checking if $S$ is a zero forcing set in a loop directed graph. On the other hand, they show that in the case of a self-damped system, there is a one-to-one correspondance between the input sets providing strong controllability and the zero forcing sets in a simple directed graph. This result together with the existing algorithms on the zero forcing sets solves the problem of finding efficiently a minimum-size input set for strong controllability of a self-damped system with a tree-structure.

\medskip

Statements of Theorems \ref{thm42} and \ref{thm43} in terms of zero forcing sets are based on a one-to-one correspondance between the zero forcing sets in a loop directed graph $G$ and the constrained matchings in the bipartite graph associated with $G$. 

\medskip

\begin{defi}\cite{hogben2}
Let $G$ be a directed graph (simple or allowing loops).
\begin{itemize}
\item Suppose that any vertex of $G$ is either black or white. When the color change rule (cf. Section 1) is applied to vertex $i$ to change the color of  vertex $j$, we say that \textbf{$i$ forces $j$} and write $i \rightarrow j$.
\item  Given a zero forcing set of $G$, we can list the forces in order in which they were performed to color the vertices of $G$ in black. This list is called a \textbf{chronological list of forces}.
\end{itemize}
\end{defi}

Notice that given a zero forcing set, a chronological list of forces is not necessarily unique. However, uniqueness is not required here.

\medskip

Theorem \ref{thm34} below is a consequence of the following theorem proved in \cite{golumbic}.

\begin{thm}\cite{golumbic}
\label{thm2}
Let $B = (V,V',E)$ be a bipartite graph and $\mathcal{M}$ be a matching in $B$. The following assertions are equivalent:
\begin{itemize}
\item $\mathcal{M}$ is a constrained matching
\item We can order the vertices of $V$, $i_1, ..., i_n$ and the vertices of $V'$, $j_1, ..., j_n$ such that for any $1 \leq k \leq |\mathcal{M}|$, $\{i_k,j_k\} \in \mathcal{M}$ and for any $1 \leq l < k \leq |\mathcal{M}|$, $\{i_k,j_l\} \notin E$.
\end{itemize}
\end{thm}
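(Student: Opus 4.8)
The plan is to prove the equivalence between a matching being constrained and the existence of a compatible ordering of the two vertex classes. First I would establish the direction from the ordering to the constrained property, since this is the more concrete implication. Suppose we have orderings $i_1,\dots,i_n$ of $V$ and $j_1,\dots,j_n$ of $V'$ such that $\{i_k,j_k\}\in\mathcal{M}$ for $1\le k\le|\mathcal{M}|$ and $\{i_k,j_l\}\notin E$ for $l<k\le|\mathcal{M}|$. I would argue that no other matching shares the same matched vertices. Consider any matching $\mathcal{M}'$ matching exactly the vertices $i_1,\dots,i_{|\mathcal{M}|}$ and $j_1,\dots,j_{|\mathcal{M}|}$. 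Vertex $i_1$ has, among the relevant matched $j$'s, only $j_1$ as an available neighbour (the condition $\{i_k,j_l\}\notin E$ forbids edges to earlier $j_l$, but I must track which indices are forbidden carefully): proceeding by induction, $i_1$ can only be matched to $j_1$, then $i_2$ only to $j_2$ once $j_1$ is used, and so on, forcing $\mathcal{M}'=\mathcal{M}$.

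Next I would prove the converse, that a constrained matching admits such an ordering. The natural approach is to construct the ordering greedily and show the process cannot get stuck, using the constrained hypothesis to rule out the bad case. I would look for a matched vertex $i\in V$ whose only matched neighbour in $V'$ is its own partner $j$ under $\mathcal{M}$; place this pair last (i.e. as index $|\mathcal{M}|$), delete both vertices, and recurse on the remaining constrained matching. To justify that such a vertex always exists, I would argue by contradiction: if every matched $i$ had a matched neighbour other than its partner, one could construct an alternating cycle through matched edges and these extra edges, and swapping along this cycle would yield a distinct matching with the same matched vertices, contradicting that $\mathcal{M}$ is constrained. This swap argument is the key mechanism and deserves care.

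The main obstacle I anticipate is precisely this existence step in the converse: showing that the absence of an ``eliminable'' vertex forces an alternating cycle (or at least an alternating structure) that can be rerouted to produce a genuinely different matching on the same vertex set. The bookkeeping of which edges lie in $E$ versus $\mathcal{M}$ along the alternating walk, and verifying the rerouted edge set is still a matching preserving all matched vertices, is where the real content lies. I would also need to confirm that the recursion respects the constrained property, namely that deleting an eliminable matched pair leaves a constrained matching on the induced bipartite subgraph, so that the inductive hypothesis applies and the ordering can be assembled consistently.

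<hr>

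Because this theorem is attributed to \cite{golumbic}, I would present the argument as a self-contained verification rather than claiming originality, and I would take care that the stated indexing convention (the forbidden edges being $\{i_k,j_l\}$ with $l<k$, which encodes an upper-triangular incidence structure after reordering) matches the triangular-pattern viewpoint used earlier in the paper, so that the ordering produced here is exactly the one witnessing permutation similarity to a triangular pattern with nonzero diagonal.
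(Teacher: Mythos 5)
The paper offers no proof of this statement: it is imported directly from \cite{golumbic}, so the only meaningful comparison is with the original Golumbic--Hirst--Lewenstein argument, and your sketch is essentially a correct reconstruction of it. The converse direction is where the content lies, and your plan is sound: repeatedly split off a matched pair whose $V$-endpoint has no matched neighbour in $V'$ other than its own partner, place it at position $|\mathcal{M}|$, and recurse; if no such pair exists, then starting from any matched $i$ and alternately following a non-matching edge to a matched $j'$ and then the matching edge to the partner of $j'$, finiteness forces a repeated $V$-vertex, hence a simple alternating cycle, and exchanging matching and non-matching edges along that cycle produces a second matching with the same matched vertices, contradicting constrainedness. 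The recursion does preserve the constrained property, by the extension argument you anticipate: any rival matching in the reduced graph, augmented by the deleted pair, would be a rival matching in $B$.

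One local error you should fix: in the direction from the ordering to uniqueness, you start the forcing induction at the wrong end. The hypothesis $\{i_k,j_l\}\notin E$ for $l<k$ places \emph{no} restriction on $i_1$ (the set of forbidden indices is empty when $k=1$), so $i_1$ may well be adjacent to many matched $j_l$; your claim that ``$i_1$ has only $j_1$ as an available neighbour'' is false as stated. What the triangular condition restricts is $j_1$, whose only neighbour among the matched $i$'s is $i_1$ (any edge $\{i_k,j_1\}$ with $k>1$ is forbidden), or dually $i_{|\mathcal{M}|}$, whose only neighbour among the matched $j$'s is $j_{|\mathcal{M}|}$. So the induction must peel off $j_1$ first ($j_1$ forced to $i_1$, then $j_2$ forced to $i_2$ once $i_1$ is used, and so on), or equivalently descend from $i_{|\mathcal{M}|}$. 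You flagged the indexing as needing care, and this is exactly the point; with that correction, both directions go through and the ordering you build is indeed the one witnessing the upper-triangular structure used elsewhere in the paper.
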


As a consequence of this theorem, we suppose w.l.o.g. that given any constrained matching $\mathcal{M}$ in a bipartite graph $(V,V',E)$, the vertices of $V$ and $V'$ are ordered so that for any $1 \leq k \leq |\mathcal{M}|$, $\{i_k, j_k\} \in \mathcal{M}$ and \begin{equation}\label{eq2} \text{ for any } 1 \leq l < k \leq |\mathcal{M}|, \ \ \{i_k, j_l\} \notin E \end{equation}

\medskip

For the proof of the following lemma, it may be useful to recall the definition of the bipartite graph $B_G = (V,V',E)$ associated with a loop directed graph $G$: the vertex sets $V = \{i_1, ..., i_n\}$ and $V' = \{j_1, ..., j_n\}$ are two copies of the vertex set of $G$. Moreover, $\{i_s,j_t\}$ is an edge of $B_G$ if and only if there is a directed edge from vertex $j_t$ to vertex $i_s$ in $G$.

\begin{lem}
\label{lem1}
Let $G$ be a loop directed graph with vertex set $V$,  $B_G = (V,V',E)$ its associated bipartite graph and $\mathcal{M} := \{ \{i_1,j_1\}, ..., \{i_t,j_t\}\}$ a constrained matching in $B_G$. Then, $V \backslash \{i_1, ..., i_t\}$ is a zero forcing set in $G$ with chronological list of forces $j_1 \rightarrow i_1, ..., j_t \rightarrow i_t$.
\end{lem}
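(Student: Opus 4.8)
The plan is to apply the reordering guaranteed by Theorem \ref{thm2}, and then to verify by induction on $k$ that the list $j_1 \rightarrow i_1, \ldots, j_t \rightarrow i_t$ is a legitimate chronological list of forces in $G$ once the vertices of $V \setminus \{i_1, \ldots, i_t\}$ are colored black.

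First I would fix notation. Here $V$ is literally the vertex set of $G$, while $V'$ is a disjoint copy; for $j_r \in V'$ write $\bar{j}_r \in V$ for the $G$-vertex that $j_r$ copies. By the definition of $B_G$, for $i_s \in V$ and $j_r \in V'$ the pair $\{i_s, j_r\}$ is an edge of $B_G$ if and only if $i_s$ is an out-neighbor of $\bar{j}_r$ in $G$. The initial black set is $V \setminus \{i_1, \ldots, i_t\} = \{i_{t+1}, \ldots, i_n\}$, so the initially white vertices are exactly $i_1, \ldots, i_t$. Since the matching edge $\{i_k, j_k\}$ lies in $E$, the vertex $i_k$ is an out-neighbor of $\bar{j}_k$, so reading the force $j_k \rightarrow i_k$ in $G$ as ``$\bar{j}_k$ forces $i_k$'' is a force along a genuine directed edge (a loop when $i_k = \bar{j}_k$). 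Note that the color change rule in a loop directed graph places no condition on the color of the forcing vertex $\bar{j}_k$, so I will never need to track whether the forcers are black.

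Next I would invoke the w.l.o.g. reordering of equation (\ref{eq2}), which lets me assume $\{i_a, j_b\} \notin E$ whenever $t \geq a > b \geq 1$; equivalently, $i_a$ is not an out-neighbor of $\bar{j}_b$ when $a > b$. The inductive step is the heart of the argument. Assume the forces $j_1 \rightarrow i_1, \ldots, j_{k-1} \rightarrow i_{k-1}$ have been performed, so that (since each such force blackens only the corresponding $i_l$) the white vertices are exactly $i_k, \ldots, i_t$. I claim $i_k$ is the unique white out-neighbor of $\bar{j}_k$: it is an out-neighbor (the matching edge) and is still white, whereas for each $l$ with $k < l \leq t$ we have $l > k$, hence $\{i_l, j_k\} \notin E$ and $i_l$ is not an out-neighbor of $\bar{j}_k$. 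The remaining out-neighbors of $\bar{j}_k$ carry indices $< k$ or $> t$ and are therefore already black, so they do not count. Thus $\bar{j}_k$ has exactly one white out-neighbor, namely $i_k$, and the color change rule blackens $i_k$, validating the $k$-th force.

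Running the induction from $k = 1$ to $k = t$ shows the whole list is a valid chronological list of forces, after which $i_1, \ldots, i_t$ are all black; together with the initially black $\{i_{t+1}, \ldots, i_n\}$ this blackens all of $G$, so $V \setminus \{i_1, \ldots, i_t\}$ is a zero forcing set with the stated list. The hard part will be the bookkeeping that converts the purely combinatorial condition (\ref{eq2}) on bipartite edges into the ``exactly one white out-neighbor'' requirement of the color change rule --- specifically, pinning down that the white vertices at step $k$ are precisely $i_k, \ldots, i_t$ and checking that the already-black out-neighbors of $\bar{j}_k$ cannot spoil the uniqueness.
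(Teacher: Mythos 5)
Your proof is correct and follows essentially the same route as the paper's: fix the initial coloring, use the matching edge $\{i_k,j_k\}$ to get that $i_k$ is an out-neighbor of $j_k$, and use the w.l.o.g. ordering of property (\ref{eq2}) to conclude that $i_k$ is the \emph{only} white out-neighbor of $j_k$ at step $k$. The only difference is presentational: the paper verifies the first force and says ``iterate,'' while you spell out the induction and the bookkeeping (which white vertices remain at step $k$, why black out-neighbors are harmless, and why the forcer's color is irrelevant under the loop-graph rule) explicitly.
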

\begin{proof}
Suppose that in the initial coloring of $G$, vertices $i_1, ..., i_t$ are the only white vertices.
From the definition of the bipartite graph $B_G$, we know that in $G$ vertex $i_1$ is a out-neighbour of vertex $j_1$, since $\{i_1,j_1\} \in E$.
Moreover, from property (\ref{eq2}), we know that in $G$ vertex $i_1$ is the only white out-neighbour of vertex $j_1$. Therefore, $j_1$ forces $i_1$.
By iterating this argument on all the edges $\{i_2,j_2\}, ..., \{i_t,j_t\}$ of $\mathcal{M}$, we prove that $V \backslash \{i_1, ..., i_t\}$ is a zero forcing set of $G$.
\end{proof}

\begin{thm}
\label{thm34}
Let $G$ be a loop directed graph with vertex set $V$ and $B_G = (V,V',E)$ the bipartite graph associated with $G$. Then,  $V \backslash \{i_1, ..., i_t\}$ is a zero forcing set of $G$ with a chronological list of forces $j_1 \rightarrow i_1, j_2 \rightarrow i_2, ..., j_t \rightarrow i_t$ if and only if $\mathcal{M} := \{ \{i_1,j_1\}, \{i_2,j_2\}, ..., \{i_t, j_t\}\}$ is a constrained matching in $B_G$.
\end{thm}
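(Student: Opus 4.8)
The plan is to note that one direction is already established and to focus on the converse. The implication that a constrained matching $\mathcal{M}$ produces the zero forcing set $V \backslash \{i_1, \dots, i_t\}$ with the stated chronological list of forces is precisely Lemma \ref{lem1}. Thus I would only prove the converse: if $V \backslash \{i_1, \dots, i_t\}$ is a zero forcing set of $G$ with chronological list of forces $j_1 \rightarrow i_1, \dots, j_t \rightarrow i_t$, then $\mathcal{M} := \{\{i_1, j_1\}, \dots, \{i_t, j_t\}\}$ is a constrained matching in $B_G$.

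First I would verify that $\mathcal{M}$ is a matching. Each pair $\{i_k, j_k\}$ is an edge of $B_G$, since the force $j_k \rightarrow i_k$ requires $i_k$ to be an out-neighbour of $j_k$ in $G$. The initially white vertices are exactly $i_1, \dots, i_t$, and the $t$ forces blacken all of them; because a force only turns a white vertex black and each force colors one vertex, the targets $i_1, \dots, i_t$ are pairwise distinct. For the sources, once a vertex $j$ has forced, its unique white out-neighbour is now black and, as colors only change from white to black, $j$ never again has a white out-neighbour and so forces at most once; hence $j_1, \dots, j_t$ are pairwise distinct. Therefore no two edges of $\mathcal{M}$ share a vertex.

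The heart of the proof is to show that $\mathcal{M}$ is constrained, for which I would apply Theorem \ref{thm2} using the chronological ordering $i_1, \dots, i_t$ of the matched vertices of $V$ and $j_1, \dots, j_t$ of those of $V'$. It then suffices to check that $\{i_k, j_l\} \notin E$ for all $1 \leq l < k \leq t$. Consider the instant just before the $l$-th force: forces $1$ through $l-1$ have already blackened $i_1, \dots, i_{l-1}$, so the white vertices are exactly $i_l, i_{l+1}, \dots, i_t$. The color change rule permits $j_l$ to force $i_l$ precisely because $i_l$ is the \emph{only} white out-neighbour of $j_l$ at that moment. Consequently, for every $k > l$ the still-white vertex $i_k$ is not an out-neighbour of $j_l$ in $G$, which by the definition of $B_G$ means exactly $\{i_k, j_l\} \notin E$.

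Having verified the ordering condition, Theorem \ref{thm2} yields that $\mathcal{M}$ is constrained, completing the converse and thus the equivalence. I expect the main obstacle to be this final step: recognizing that the ``exactly one white out-neighbour'' clause of the color change rule, evaluated against the precise set of white vertices present at each stage of the chronological list, is exactly the non-edge condition \eqref{eq2} characterizing constrained matchings in Theorem \ref{thm2}.
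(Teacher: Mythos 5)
Your proof is correct and follows essentially the same route as the paper's: the forward direction is delegated to Lemma \ref{lem1}, and the converse verifies the matching property and then checks condition \eqref{eq2} from the chronological list of forces before invoking Theorem \ref{thm2}. In fact, your write-up is slightly more detailed than the paper's, which only argues the case $l=1$ explicitly and then says ``by iterating this argument,'' whereas you spell out the general $l$-th step and justify why sources and targets are pairwise distinct.
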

\begin{proof}
The sufficient condition has been proved in Lemma \ref{lem1}. Suppose that $V \backslash \{i_1, ..., i_t\}$ is a zero forcing set of $G$ with chronological list of forces $$j_1 \rightarrow i_1, ..., j_t \rightarrow i_t.$$ For any $1 \leq l \leq t$, since $j_l \rightarrow i_l$, vertex $i_l$ is a out-neighbour of vertex $j_l$. Therefore, $\{i_l, j_l\}$ is an edge of $B_G$. In addition, since any white vertex is forced only once and any vertex forces at most one vertex, $\mathcal{M} := \{ \{i_1,j_1\}, \{i_2,j_2\}, ..., \{i_t, j_t\}\}$ is a matching in $B_G$. Since $j_1$ forces $i_1$, vertex $i_1$ is the only white out-neighbour of vertex $j_1$. Hence, for any $1 < k \leq t$, $\{i_k,j_1\} \notin E$. By iterating this argument on all the forces $j_2 \rightarrow i_2, ..., j_t \rightarrow i_t$, we prove that $\mathcal{M}$ meets property $(\ref{eq2})$. Therefore, from Theorem \ref{thm2}, $\mathcal{M}$ is a constrained matching.
\end{proof}

Notice that in the previous theorem the set $V \backslash \{i_1, ..., i_t\}$ is the set of unmatched vertices in the vertex subset $V$ of $B_G$ resulting from the constrained matching $\mathcal{M}$.

\medskip

Thanks to the previous result, we can re-state Theorem \ref{thm42} in terms of zero forcing sets. 

\medskip

Remember that $V_{loop}$ is the set of vertices with a loop in the underlying loop directed graph $G$.

\begin{thm}[Restatement of Theorem \ref{thm42}]
\label{thm55}
Let $G$ be a loop directed graph on $n$ vertices with pattern $\textbf{A}$ and $S$ be an input set with cardinality $m \leq n$. System $(\textbf{A}, \textbf{B}(S))$ is strongly $S$-controllable if and only if 
\begin{enumerate}[-]
\item $S$ is a zero forcing set of $G$ and
\item $S$ is a zero forcing set of $G_\times$ for which there is a chronological list of forces that does not contain any force of the form $i \rightarrow i$ with $i \in V_{loop}$.
\end{enumerate}
\end{thm}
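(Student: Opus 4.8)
The plan is to derive Theorem~\ref{thm55} directly from Theorem~\ref{thm42} by translating each of its two constrained-matching conditions into a zero forcing statement through Theorem~\ref{thm34}. Throughout I would identify the vertex set $V$ of $G$ with the row/column index set $\{1, \dots, n\}$, so that a subset $S \subseteq V$ simultaneously names a set of vertices and the set of rows deleted in $\textbf{A}(S|.)$. Since both characterizations are biconditionals, it suffices to match up the conditions one at a time and then conjoin them.

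First I would handle the condition ``$\textbf{A}(S|.)$ has a constrained $(n-m)$-matching''. The key observation is that the bipartite graph $B_{\textbf{A}(S|.)}$ is obtained from $B_G$ by deleting the $V$-side vertices indexed by $S$ while keeping all of $V'$; as $\textbf{A}(S|.)$ has exactly $n-m$ rows, an $(n-m)$-matching in it is a matching of $B_G$ saturating precisely the rows $V \setminus S$ and leaving the rows $S$ unmatched. By the ordering characterization of Theorem~\ref{thm2}, constrainedness depends only on the (non-)edges among matched vertices, so such a matching is constrained in $B_{\textbf{A}(S|.)}$ if and only if it is constrained in $B_G$; deleting the unmatched rows $S$ changes nothing. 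Applying Theorem~\ref{thm34} with $t = n-m$, a constrained matching of $B_G$ whose unmatched $V$-vertices are exactly $S$ exists precisely when $S$ is a zero forcing set of $G$. This establishes the equivalence of the first condition of Theorem~\ref{thm42} with the first bullet of Theorem~\ref{thm55}.

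Next I would treat ``$\textbf{A}_\times(S|.)$ has a constrained $V_{loop}$-less $(n-m)$-matching'' by repeating the previous argument with $G_\times$ in place of $G$, using that $\textbf{A}_\times$ is the pattern of $G_\times$. The same reasoning shows that a constrained $(n-m)$-matching in $\textbf{A}_\times(S|.)$ corresponds, via Theorem~\ref{thm34} applied to $B_{G_\times}$, to $S$ being a zero forcing set of $G_\times$ together with a chronological list of forces $j_1 \rightarrow i_1, \dots, j_{n-m} \rightarrow i_{n-m}$. The additional $V_{loop}$-less restriction forbids exactly the matching edges of the form $\{i,i\}$ with $i \in V_{loop}$, and under the correspondence of Theorem~\ref{thm34} the edge $\{i,i\}$ is precisely the self-edge realizing the self-force $i \rightarrow i$. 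Hence a $V_{loop}$-less constrained matching corresponds exactly to a zero forcing list for $G_\times$ that omits every force $i \rightarrow i$ with $i \in V_{loop}$, which is the second bullet of Theorem~\ref{thm55}. Conjoining the two equivalences yields the theorem.

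The step I expect to demand the most care is the bookkeeping around the row deletion: one must verify that passing between the rectangular pattern $\textbf{A}(S|.)$ (resp.\ $\textbf{A}_\times(S|.)$) and the square bipartite graph $B_G$ (resp.\ $B_{G_\times}$) neither creates nor destroys constrained matchings, and that fixing the matching size at $n-m$ forces the unmatched $V$-side to be exactly $S$. This hinges on the intrinsic, ordering-based characterization of constrainedness in Theorem~\ref{thm2}, which makes constrainedness independent of the unmatched rows and thereby legitimizes the translation; once that is secured, the identification of self-edges $\{i,i\}$ with self-forces $i \rightarrow i$ is immediate from the definitions, and the remaining verifications are routine.
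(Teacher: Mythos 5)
Your proposal is correct and follows essentially the same route as the paper: both conditions of Theorem \ref{thm42} are translated into zero forcing statements via the correspondence of Theorem \ref{thm34}, with matching self-edges $\{i,i\}$ identified with self-forces $i \rightarrow i$, and the two equivalences are then conjoined. The only difference is that you spell out the bookkeeping showing that deleting the unmatched rows $S$ neither creates nor destroys constrained matchings, a step the paper asserts implicitly in a single sentence.
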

\begin{proof}
Theorem \ref{thm34} states that $\textbf{A}(S|.)$ has a constrained $(n-m)$-matching if and only if $S$ is a zero forcing set of $G$. Indeed, in the bipartite graph $(V,V',E)$ associated with $\textbf{A}$, the unmatched vertices of $V$ resulting from a constrained $(n-m)$-matching of $\textbf{A}(S|.)$ are the vertices in $S$.

From the same argument,  $\textbf{A}_\times(S|.)$ has a constrained $(n-m)$-matching if and only if $S$ is a zero forcing set of $G_\times$. Besides, there is a constrained $(n-m)$-matching which is $V_{loop}$-less in $\textbf{A}_\times(S|.)$ if and only if in $G_\times$ zero forcing set $S$ has a chronological list of forces with no force of the form $i \rightarrow i$ with $i \in V_{loop}$. Indeed, Theorem \ref{thm34} states that the edges in the constrained matching form a chronological list of forces for zero forcing set $S$.

Theorem \ref{thm42} concludes the proof.
\end{proof}

Notice that testing if a system is strongly $S$-controllable is equivalent to checking if $S$ is a zero forcing set in a loop directed graph, which can be done in time $\mathcal{O}(n^2)$. As an example, consider the system whose underlying loop directed graph $G$ is in Figure 1 and check that this system is strongly $S$-controllable for $S = \{1\}$. We immediately check that $S$ is a zero forcing set of $G$ and that $S$ is a zero forcing set of $G_\times$ with chronological list of forces: $1 \rightarrow 2, 2 \rightarrow 3$. Since $V_{loop} = \{1\}$ and in this list vertex 1 does not force itself, the system is then strongly $S$-controllable.

\begin{figure}
\centering
\includegraphics[width=0.4\textwidth]{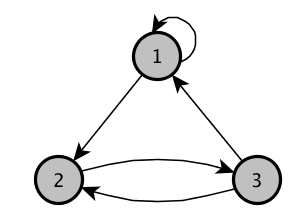}
\caption{A system with this loop directed graph as underlying structure is strongly $S$-controllable with $S = \{1\}$.}
\label{fig1}
\end{figure}

\medskip

The following lemma is a first step to a statement of Theorem \ref{thm43} in terms of zero forcing sets in a \textit{simple} directed graph.

\begin{lem}
\label{lem56}
Consider a loop directed graph $G$ on $n$ vertices underlying a self-damped system. A vertex subset $S \subseteq V$ is a input set for strong controllability of the system if and only if $S$ is a zero forcing set in $G_\times$ for which there is a chronological list of forces with no force of the form $i \rightarrow i$.
\end{lem}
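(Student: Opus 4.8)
The plan is to derive this lemma directly from the restatement Theorem~\ref{thm55}, specialized to the self-damped situation. The first step is to record what ``self-damped'' buys us at the level of the graph: since every vertex carries a loop, passing from $G$ to $G_\times$ adds no new edges, so $G = G_\times$; and the set of looped vertices is the whole vertex set, so $V_{loop} = V$. These two identifications are the only structural input needed, and they are immediate from the definitions of $G_\times$ and $V_{loop}$.

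Next I would substitute these equalities into the two bullet conditions of Theorem~\ref{thm55}. The first condition, ``$S$ is a zero forcing set of $G$'', becomes ``$S$ is a zero forcing set of $G_\times$''. The second condition, ``$S$ is a zero forcing set of $G_\times$ admitting a chronological list of forces containing no force $i \rightarrow i$ with $i \in V_{loop}$'', becomes the same statement with the restriction $i \in V_{loop}$ replaced by the unrestricted prohibition ``no force of the form $i \rightarrow i$'', precisely because $V_{loop} = V$.

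The key observation is then that the second (specialized) condition already entails the first: any $S$ that is a zero forcing set of $G_\times$ with \emph{some} prescribed chronological list is in particular a zero forcing set of $G_\times$. Hence the conjunction of the two conditions of Theorem~\ref{thm55} collapses to the second condition alone, which is exactly the condition stated in Lemma~\ref{lem56}. Since being an ``input set for strong controllability'' means precisely that the system $(\textbf{A}, \textbf{B}(S))$ is strongly $S$-controllable, the desired equivalence follows at once.

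I do not expect any genuine obstacle here: the lemma is essentially a simplification of Theorem~\ref{thm55} rather than a new argument. The only points deserving care are the redundancy claim---verifying that the first bullet is absorbed by the second---and the bookkeeping that $V_{loop} = V$ converts the loop-restricted ban on self-forces into an unrestricted one. Everything else is a direct appeal to the already-established restatement.
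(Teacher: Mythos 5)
Your proposal is correct and follows exactly the paper's own route: the paper likewise observes that self-dampedness gives $G_\times = G$ and $V_{loop} = V$ and then invokes Theorem~\ref{thm55}, with the absorption of the first condition into the second left implicit. Your explicit verification of that redundancy is a harmless elaboration of the same argument.
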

\begin{proof}
Since the system is self-damped, $G_\times = G$ and $V_{loop} = V$. Consequently, this result follows from Theorem \ref{thm55}.
\end{proof}

From this lemma, we deduce a statement of Theorem \ref{thm43} in terms of zero forcing sets in a \textit{simple} directed graph. The simple directed graph $G_s$\textcolor{red}{\footnote{Whenever $s$ is used as a subscript, it always refers to a simple directed graph. The vertex subset $S$ is never used as a subscript.}} is obtained from the loop directed graph $G$ by removing the loops on its vertices.

Recall that in a simple directed graph, a vertex must be black to be able to force one of its out-neighbors.

\begin{thm}[Restatement of Theorem \ref{thm43}]
\label{thm57}
Consider a loop directed graph $G$ on $n$ vertices underlying a self-damped system. A vertex subset $S \subseteq V$ is a (minimum-size) input set for strong controllability of the system if and only if $S$ is a (minimum) zero forcing set in the simple directed graph $G_s$.
\end{thm}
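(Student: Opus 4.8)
The plan is to connect the two color change rules via Lemma~\ref{lem56}, which already characterizes input sets for strong controllability as zero forcing sets in $G_\times$ admitting a chronological list of forces with no force of the form $i \rightarrow i$. Since the system is self-damped, $G_\times = G$, so the task reduces to proving that such ``self-loop-avoiding'' zero forcing sets in the \emph{loop} graph $G$ coincide exactly with the ordinary zero forcing sets in the \emph{simple} graph $G_s$ obtained by deleting all loops. First I would establish the set-level equivalence, and then upgrade it to the statement about \emph{minimum} sizes.

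The key observation is that the loop-graph color change rule and the simple-graph color change rule differ in precisely one respect: in the loop graph a vertex $i$ with a loop can force a white out-neighbour (including itself) while $i$ is still white, whereas in the simple graph $i$ must already be black to force. So I would argue both inclusions. For the forward direction, suppose $S$ is a zero forcing set of $G$ with a chronological list of forces avoiding all $i \rightarrow i$. I claim this same list is a valid simple-graph forcing sequence on $G_s$: each force $i \rightarrow j$ in the list has $j \neq i$ (by hypothesis), and I would show by induction along the list that $i$ is already black at the moment it forces --- this is where the ``no $i\rightarrow i$'' condition does the work, since in a loop graph the only way a still-white vertex can perform a force is by forcing itself. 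Hence every force respects the simple rule, and $S$ forces all of $G_s$. For the converse, any zero forcing sequence for $G_s$ uses only forces $i \rightarrow j$ with $i$ already black and $j \neq i$; reading it in $G$ (where each edge of $G_s$ is still present) gives a valid loop-graph forcing list that trivially contains no force of the form $i \rightarrow i$. Thus the two families of sets coincide exactly, and by Lemma~\ref{lem56} these are precisely the input sets for strong controllability.

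From the set-level equivalence the size statement is immediate. Since $S$ is an input set for strong controllability if and only if $S$ is a zero forcing set of $G_s$, the \emph{minimum}-size input sets correspond exactly to the zero forcing sets of $G_s$ of minimum cardinality, i.e.\ the minimum zero forcing sets of $G_s$. This handles the parenthetical ``(minimum-size)''/``(minimum)'' clause without any extra work.

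The main obstacle is the inductive step in the forward direction: verifying that, along a chronological list of forces in the loop graph $G$ that avoids every self-force, the forcing vertex $i$ is necessarily black when it acts. The subtlety is that the loop rule permits a white vertex to force, so I must rule this out using only the stated hypotheses. The argument is that a white vertex can perform a force under the loop rule only when its unique white out-neighbour is itself --- forcing any out-neighbour requires all \emph{other} out-neighbours to be black, and a still-white vertex is one of its own out-neighbours (via its loop, which exists since $G$ is self-damped), so the forced vertex would have to be $i$ itself, contradicting the no-self-force assumption. Once this is pinned down cleanly, the remainder is routine.
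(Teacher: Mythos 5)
Your proposal is correct and follows essentially the same route as the paper: reduce via Lemma~\ref{lem56} to the claim that zero forcing sets of $G_\times$ admitting a chronological list with no force $i \rightarrow i$ are exactly the zero forcing sets of $G_s$, with the key observation that a still-white vertex in $G_\times$ can only force itself (being its own white out-neighbour via its loop). The paper states this equivalence in two sentences; you merely make explicit the induction along the chronological list that the paper leaves implicit.
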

\begin{proof}
We show that $S$ is a zero forcing set in $G_s$ if and only if $S$ is a zero forcing set in $G_\times$ for which there is a chronological list of forces with no force of the form $i \rightarrow i$. Indeed, in $G_\times$ any vertex has a loop. Moreover, if there is no force of the form $i \rightarrow i$, it means that any vertex must be black to be able to force one of its out-neighbours. Therefore, $S$ is a zero forcing set in $G_s$ if and only if $S$ is a zero forcing set in $G_\times$ for which there is a chronological list of forces with no force of the form $i \rightarrow i$.

\medskip

This observation together with Lemma \ref{lem56} proves the theorem.
\end{proof}

This shows that the minimum zero forcing sets in the simple directed graph $G_s$ provide minimum-size input sets for strong controllability of the underlying self-damped system. However, we deduce from the NP-hardness result of Aazami \cite{aazami} about the zero forcing number of a simple undirected graph that computing a minimum zero forcing set in a simple directed graph is an NP-hard problem. Nevertheless, there are some efficient algorithms computing minimum zero forcing sets in a simple tree, see for example Algorithm 2.3 in \cite{fallat}. Such algorithms together with the previous theorem show that one can compute efficiently a minimum-size input set for strong controllability of a self-damped system with a tree structure.

\begin{thm}
One can compute in polynomial time a minimum-size input set for strong controllability of a self-damped system with a tree-structure.
\end{thm}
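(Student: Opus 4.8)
The plan is to obtain the result by combining the equivalence established in Theorem \ref{thm57} with an existing polynomial-time algorithm for zero forcing on trees, so that essentially no new work beyond invoking these two facts is required.

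First I would appeal to Theorem \ref{thm57}: for a self-damped system whose underlying loop directed graph is $G$, a vertex subset $S$ is a minimum-size input set for strong controllability if and only if $S$ is a minimum zero forcing set of the simple directed graph $G_s$. This reduces the controllability question to the purely combinatorial problem of computing a minimum zero forcing set of $G_s$, and a minimum-size input set can then be read off directly from such a set.

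Next I would use the tree-structure hypothesis. Deleting the loops of $G$ leaves $G_s$, and saying that the system has a tree-structure means precisely that $G_s$ is a tree. Invoking the remark of Section 1 that the zero forcing number of an undirected graph equals that of its associated directed graph, the minimum zero forcing sets of $G_s$ coincide with those of the underlying undirected tree. Hence it suffices to compute a minimum zero forcing set of an undirected tree, for which polynomial-time algorithms are available; I would apply Algorithm 2.3 of \cite{fallat}, which runs in time polynomial in the number of vertices.

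Putting these two steps together yields a polynomial-time procedure: form $G_s$ by removing the loops of $G$, run the tree algorithm to produce a minimum zero forcing set $S$, and return $S$ as the desired minimum-size input set, its correctness being guaranteed by Theorem \ref{thm57}. The only point requiring care, and the nearest thing to an obstacle, is the passage between the directed graph $G_s$ and the undirected tree on which the cited algorithm operates; this is resolved cleanly by the directed--undirected equivalence recalled in Section 1, so that no genuine difficulty remains and the theorem follows at once.
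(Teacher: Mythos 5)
Your proposal is correct and follows essentially the same route as the paper: invoke Theorem \ref{thm57} to reduce the problem to computing a minimum zero forcing set of the simple directed graph $G_s$, and then apply Algorithm 2.3 of \cite{fallat} for trees. The only difference is that you make explicit the passage between the directed graph $G_s$ and the undirected tree on which the cited algorithm operates (via the directed--undirected equivalence recalled in Section 1), a step the paper leaves implicit; this is a welcome clarification, and note it relies on the tree-structure being symmetric -- the case of an arbitrarily oriented simple directed tree is precisely what the paper lists as an open problem in its conclusion.
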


\section{Related results}

Recently, a result related to Theorems \ref{thm55} and \ref{thm57} has been found by Monshizadeh \textit{et al.} \cite{monsh}, but for systems underlying a \textit{simple} directed graph. In this section, we explain the main result of \cite{monsh} and compare it with our results from the previous section.

\medskip

A simple directed graph $G_s = (V,E)$ on $n$ vertices defines the matrix family $$\mathcal{Q}_{sd}(G_s) = \{ A \in \mathbb{R}^{n \times n} : \text{ for } i \neq j, a_{ij} \neq 0 \Leftrightarrow (j,i) \in E\}.$$ The pattern $\textbf{A}_s$ associated with $G_s$ is an $n \times n$ matrix,  where any off-diagonal entry $a_{ij}$ is a star $\star$ if and only if $(j,i)$ is an edge of $G_s$, every diagonal entry is a question mark and the other entries are zero. As an example, the pattern $\textbf{A}_s$ of the simple directed graph $G_s$ in Figure 2b) is $$\textbf{A}_s = \left(\begin{array}{ccc} ? & \star & 0 \\ \star & ? & 0 \\ \star & \star & ?\end{array}\right).$$ A star $\star$ denotes a nonzero entry, whereas a question mark can be a zero or nonzero entry. The following matrices $$A_1 = \left(\begin{array}{ccc} -3 & 1 & 0 \\ 9 & 0 & 0 \\ -5 & -4 & 0 \end{array}\right) \ \ \ A_2 = \left(\begin{array}{ccc} 0 & 1 & 0 \\ 2 & -3 & 0 \\ 1 & -4 & 8 \end{array}\right)$$ are both in $\mathcal{Q}_{sd}(G_s)$ and are two realizations of pattern $\textbf{A}_s$. We write $A_1 \in \textbf{A}_s$ and $A_2 \in \textbf{A}_s$.

\medskip

Unlike the matrices of $\mathcal{Q}_{ld}(G)$ defined by a loop directed graph $G$, the matrices in $\mathcal{Q}_{sd}(G_s)$ have a free diagonal. For an example, have a look at the loop directed graph $G$ in Figure 2a) whose associated simple directed graph is in Figure 2b). The pattern $\textbf{A}$ associated with $G$ is $$\textbf{A} = \left(\begin{array}{ccc} \star & \star & 0 \\ \star & 0 & 0 \\ \star & \star & 0  \end{array}\right).$$ Therefore, while $A_1$ and $A_2$ are both a realization of $\textbf{A}_s$, $A_1$ is a realization of $\textbf{A}$ whereas $A_2$ is not.

\medskip

Consequently, given a loop directed graph $G$ and its associated simple directed graph $G_s$, $$\mathcal{Q}_{ld}(G) \subseteq \mathcal{Q}_{sd}(G_s).$$

\begin{figure}
\centering
\includegraphics[width=0.8\textwidth]{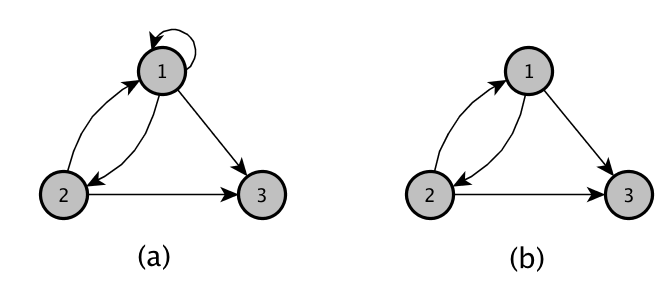}
\caption{A loop directed graph $G$ in (a) and its associated simple directed graph $G_s$ in (b). Given the input set $S = \{1\}$, system $(\textbf{A}, \textbf{B}(S))$ underlying $G$ is strongly $S$-controllable, whereas system $(\textbf{A}_s, \textbf{B}(S))$ underlying $G_s$ is not.}
\label{fig1}
\end{figure}

A networked system whose underlying graph is a \textit{simple} directed graph $G_s = (V,E)$ on $n$ vertices is a linear system of the form: $$\dot{x}(t) = Ax(t) + Bu(t),$$ where $x(t) \in \mathbb{R}^n$ is the state vector, $u(t) \in \mathbb{R}^m$ is the outside controller, $A \in \mathcal{Q}_{sd}(G_s)$ and $B \in \textbf{B}(S)$, for some vertex set $S \subseteq V$ of size $m$. Such a system is referred to as system $(A,B)$.

\medskip

Given an input set $S$, any system $(A,B)$ underlying a simple directed graph $G_s$ with pattern $\textbf{A}_s$ is a realization of the pair $(\textbf{A}_s, \textbf{B}(S))$, meaning that $A \in \textbf{A}_s$ and $B \in \textbf{B}(S)$. We write $(A,B) \in (\textbf{A}_s, \textbf{B}(S))$. The set of systems underlying $G_s$ is referred to as system $(\textbf{A}_s, \textbf{B}(S))$.

\medskip

Similarly to the systems whose underlying graph is a loop directed graph, we say that a system $(\textbf{A}_s, \textbf{B}(S))$ is strongly $S$-controllable if all the systems $(A,B) \in (\textbf{A}_s, \textbf{B}(S))$ are controllable.

\medskip

Here is the main result of \cite{monsh} about strong controllability of system $(\textbf{A}_s, \textbf{B}(S))$.

\begin{thm}\cite{monsh}
\label{monshi}
Given a simple directed graph $G_s = (V,E)$ with pattern $\textbf{A}_s$ and an input set $S \subseteq V$, system $(\textbf{A}_s, \textbf{B}(S))$ is strongly $S$-controllable if and only if $S$ is a zero forcing set of $G_s$. 
\end{thm}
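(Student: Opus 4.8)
The plan is to reduce strong $S$-controllability to the Popov--Belevitch--Hautus (PBH) test and then translate the resulting linear-algebraic condition into the colour change rule on $G_s$. Recall the standard fact that a single realization $(A,B)$ fails to be controllable if and only if there exist $\lambda \in \mathbb{C}$ and a nonzero vector $q \in \mathbb{C}^n$ with $A^T q = \lambda q$ and $B^T q = 0$. Since every column of $B \in \textbf{B}(S)$ has its unique nonzero entry in a row indexed by a vertex of $S$, the condition $B^T q = 0$ simply says that $q_v = 0$ for every $v \in S$. Writing out the $i$-th coordinate of $A^T q = \lambda q$ and using that $a_{ji} \neq 0 \Leftrightarrow (i,j) \in E$ for $j \neq i$, the eigenvector equation at vertex $i$ reads $a_{ii} q_i + \sum_j a_{ji} q_j = \lambda q_i$, where the sum ranges over the out-neighbours $j$ of $i$ in $G_s$. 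Thus the theorem follows once I show that $S$ is a zero forcing set of $G_s$ if and only if no realization admits such a pair $(\lambda,q)$ with $q \neq 0$.

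For the direction ``$S$ a zero forcing set $\Rightarrow$ strongly $S$-controllable'', I would argue by contradiction: fix any realization $(A,B)$ and any pair $(\lambda,q)$ as above, and prove $q = 0$ by propagating zeros along a chronological list of forces. The base case is $q_v = 0$ for $v \in S$. For the inductive step, suppose a force $p \to r$ is performed, so that in $G_s$ the vertex $p$ is black and $r$ is its \emph{unique} white out-neighbour. Reading the eigenvector equation at coordinate $p$ and using that $p$ is already black (so $q_p = 0$) together with the fact that every out-neighbour of $p$ other than $r$ is already black, the left-hand side collapses to the single term $a_{rp}\,q_r$; since $a_{rp} \neq 0$, this forces $q_r = 0$. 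Because $S$ forces all of $V$, every coordinate of $q$ vanishes, contradicting $q \neq 0$, so every realization is controllable.

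For the converse ``$S$ not a zero forcing set $\Rightarrow$ some realization is uncontrollable'', I would run the forcing process from $S$ until it stalls, obtaining the derived set $D \supseteq S$ with $D \neq V$, and set $W := V \setminus D \neq \emptyset$. I would take $q$ to be the indicator vector of $W$ (so $q_i = 1$ for $i \in W$ and $q_i = 0$ otherwise); then $q \neq 0$ and $q$ vanishes on $S$, giving $B^T q = 0$ for any $B \in \textbf{B}(S)$. It remains to choose a realization $A \in \mathcal{Q}_{sd}(G_s)$ and a scalar $\lambda$ with $A^T q = \lambda q$. The coordinate equations at vertices $i \in W$ can be satisfied for an arbitrary fixed $\lambda$ by solving for the \emph{free} diagonal entry $a_{ii}$ --- this is the crucial place where the free diagonal of the simple-graph setting is used. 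The coordinate equations at vertices $i \in D$ reduce to $\sum_j a_{ji} = 0$, the sum over out-neighbours $j$ of $i$ lying in $W$; since the process stalled, each black vertex $i \in D$ has either $0$ or at least $2$ white out-neighbours, and because the weight of each arc appears in exactly one such equation, I can choose these nonzero weights so that every such sum vanishes (with two or more free nonzero terms this is always possible). This produces a realization for which $(\lambda,q)$ witnesses uncontrollability.

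The main obstacle is the converse construction: one must produce a \emph{single} matrix $A$ and a \emph{single} vector $q$ satisfying all coordinate equations simultaneously, rather than treating each vertex in isolation. The two features that make this possible are exactly the defining properties of the simple-directed-graph setting and of a stalled forcing process --- the diagonal of $A$ is unconstrained, which trivializes the equations at white vertices, while the stalling condition guarantees that every black vertex sees $0$ or at least $2$ white out-neighbours, supplying the degrees of freedom needed to annihilate the equations at black vertices. I would finally verify that the weights assigned across the various equations never conflict, which holds because each off-diagonal entry $a_{ji}$, the weight of the arc from $i$ to $j$, occurs only in the coordinate equation indexed by $i$.
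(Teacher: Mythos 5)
Your proof is correct. Note that the paper itself offers no proof of this statement: Theorem~\ref{monshi} is imported verbatim from \cite{monsh}, so there is nothing internal to compare against; your argument is essentially the one in that reference. Both directions are sound: the forward direction correctly propagates zeros of a left eigenvector $q$ along a chronological list of forces (the term $a_{pp}q_p$ is harmless precisely because the forcing vertex $p$ must already be black, i.e.\ $q_p=0$, which is where the simple-graph colour rule enters), and the converse correctly exploits the two degrees of freedom of the setting --- the free diagonal, which absorbs the eigenvector equations at white vertices, and the stalling property that every black vertex has $0$ or at least $2$ white out-neighbours, which lets you cancel the equations at black vertices with nonzero weights. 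Your closing consistency check is also the right one: each entry $a_{ji}$ appears in exactly one coordinate equation of $A^Tq=\lambda q$ (the one indexed by $i$), so the choices never conflict.
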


For a subset $\mathcal{Q}'(G_s) \subseteq \mathcal{Q}_{sd}(G_s)$, a system subset of $(\textbf{A}_s, \textbf{B}(S))$ is the set of systems $(A,B)$ where $A \in \mathcal{Q}'(G_s)$ and $B \in \textbf{B}(S)$.

Such a system subset is strongly $S$-controllable if all the systems $(A,B)$ with $A \in \mathcal{Q}'(G_s)$ and $B \in \textbf{B}(S)$ are controllable.

In particular, if $G_s$ is the simple directed graph with pattern $\textbf{A}_s$ associated with a loop directed graph $G$ with pattern $\textbf{A}$, the system set $(\textbf{A}, \textbf{B}(S))$ is a subset of the systems in $(\textbf{A}_s, \textbf{B}(S))$ since $\mathcal{Q}_{ld}(G) \subseteq \mathcal{Q}_{sd}(G_s)$.

\medskip

By definition, if system $(\textbf{A}_s, \textbf{B}(S))$ is strongly $S$-controllable, then any system subset of $(\textbf{A}_s, \textbf{B}(S))$ is strongly $S$-controllable. However, the converse is false in general.

As an example, take input set $S = \{1\}$ and consider system $(\textbf{A}, \textbf{B}(S))$ whose underlying graph is the loop directed graph $G$ in Figure 2a) and system $(\textbf{A}_s, \textbf{B}(S))$ whose underlying graph is the simple directed graph $G_s$ in Figure 2b) associated with $G$.
From Theorem \ref{thm55}, we deduce that $(\textbf{A}, \textbf{B}(S))$ is strongly $S$-controllable since $S$ is a zero forcing set of $G$ and a zero forcing set of $G_\times$ where vertex 1 does not need to force itself. Instead, since $S$ is not a zero forcing set in the simple directed graph $G_s$, Theorem \ref{monshi} claims that $(\textbf{A}_s, \textbf{B}(S))$ is not strongly $S$-controllable.

\medskip

Consequently, given a loop directed graph $G$ with pattern $\textbf{A}$ and its associated simple directed graph $G_s$ with pattern $\textbf{A}_s$, Theorem \ref{thm55} analyzes the strong controllability of system $(\textbf{A}, \textbf{B}(S))$, for some input set $S$ whereas Theorem \ref{monshi} is about the strong controllability of a bigger system set $(\textbf{A}_s, \textbf{B}(S))$ underlying $G_s$.

In addition, from Theorems \ref{thm57} and \ref{monshi}, we deduce the following result. 

\begin{thm}
Let $G$ be a loop directed graph with pattern $\textbf{A}$ underlying a \textbf{self-damped} system and $G_s$ be its associated simple directed graph with pattern $\textbf{A}_s$. System $(\textbf{A}, \textbf{B}(S))$ is strongly $S$-controllable if and only if system $(\textbf{A}_s, \textbf{B}(S))$ is strongly $S$-controllable.
\end{thm}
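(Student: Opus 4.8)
The plan is to reduce both sides of the claimed equivalence to one and the same intermediate condition --- namely that $S$ is a zero forcing set of the simple directed graph $G_s$ --- and then to chain the two resulting biconditionals. Since both of the relevant prior theorems characterize strong $S$-controllability in terms of zero forcing sets of $G_s$, the result should follow as an essentially formal corollary.

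First I would record the structural facts that come from the system being self-damped: the loop directed graph $G$ carries a loop on every vertex, so that $V_{loop} = V$ and $G_\times = G$, and the associated simple directed graph $G_s$ is exactly $G$ with all of its loops deleted. This places us squarely in the hypotheses of Theorem~\ref{thm57}.

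Next I would invoke Theorem~\ref{thm57} for the left-hand side: for a self-damped system with underlying loop directed graph $G$, the set $S$ is an input set for strong controllability --- equivalently, $(\textbf{A}, \textbf{B}(S))$ is strongly $S$-controllable --- if and only if $S$ is a zero forcing set of $G_s$. Then I would invoke Theorem~\ref{monshi} of Monshizadeh \textit{et al.} for the right-hand side: since $\textbf{A}_s$ is precisely the pattern associated with the simple directed graph $G_s$, the system $(\textbf{A}_s, \textbf{B}(S))$ is strongly $S$-controllable if and only if $S$ is a zero forcing set of $G_s$. Combining these two equivalences through their common middle term ``$S$ is a zero forcing set of $G_s$'' yields the desired equivalence between strong $S$-controllability of $(\textbf{A}, \textbf{B}(S))$ and of $(\textbf{A}_s, \textbf{B}(S))$.

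There is essentially no analytic obstacle in this argument; the only point deserving care is to confirm that the two cited theorems genuinely refer to the same object --- the zero forcing sets of the \emph{same} simple directed graph $G_s$, computed under the simple-graph color change rule (in which a vertex must itself be black before it can force an out-neighbour). Because both Theorem~\ref{thm57} and Theorem~\ref{monshi} are stated with respect to $G_s$ and its pattern $\textbf{A}_s$, this identification of the middle terms is immediate, and the equivalence follows at once.
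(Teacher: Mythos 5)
Your proposal is correct and is exactly the paper's argument: the paper deduces this theorem by chaining Theorem~\ref{thm57} (strong $S$-controllability of $(\textbf{A}, \textbf{B}(S))$ for a self-damped system $\Leftrightarrow$ $S$ is a zero forcing set of $G_s$) with Theorem~\ref{monshi} (strong $S$-controllability of $(\textbf{A}_s, \textbf{B}(S))$ $\Leftrightarrow$ $S$ is a zero forcing set of $G_s$), just as you do. Your added check that both theorems refer to zero forcing in the same graph $G_s$ under the simple-graph color change rule is a sound precaution and consistent with the paper.
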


\medskip

Other system subsets of $(\textbf{A}_s, \textbf{B}(S))$ were studied in \cite{monsh}, notably when the simple directed graph is symmetric. We refer the reader to Section IV.A of \cite{monsh} for more details.

\section{Conclusion}

This paper links the notions of zero forcing, constrained matching and strong controllability.

\medskip

As a first result, we have shown that computing the zero forcing number of any \textit{loop directed} graph is NP-hard. This completes the NP-hardness result of \cite{aazami} about the zero forcing number of a \textit{simple undirected} graph.

\medskip

The rest of the paper sheds a new light on the strong controllability of a networked system, through the zero forcing sets. Our results are based on a one-to-one correspondance between the zero forcing sets in a loop directed graph $G$ and the constrained matchings in the bipartite graph associated with $G$. We have re-stated some results of \cite{chapman} about strong controllability in terms of zero forcing sets. On the one hand, these new statements show that testing whether or not a system is strongly $S$-controllable from an input set $S$ is equivalent to checking if $S$ is a zero forcing set in a loop directed graph. On the other hand, they show that the (minimum) zero forcing sets in the \textit{simple} interconnection graph provide (minimum-size) input sets for the strong controllability of a self-damped system. In particular, we deduce that one can find in polynomial time a minimum-size input set for the strong controllability of a self-damped system with a tree structure, using existing algorithms on zero forcing. 
\medskip

A similar work was done in \cite{monsh} when the underlying graph is a \textit{simple} directed graph and in\cite{burgarth3, burgarth, burgarth2, burgarth4} where the link between the zero forcing sets of a simple undirected graph and the controllability of a quantum system has been shown.

\medskip

All these results show the role of the zero forcing sets in the study of the dynamics of networked systems and should motivate additional research on zero forcing. Here are some open problems:

\begin{enumerate}[-]
\item The NP-hardness of the computation of the zero forcing number of any simple undirected graph implies the NP-hardness for the zero forcing number of any simple directed graph. In this paper, we have proved that computing the zero forcing number of any loop directed graph is also NP-hard. We also expect NP-hardness for the zero forcing number of any loop undirected graph. However, a thorough argument is still needed.

\item Theorem \ref{thm57} shows that selecting a minimum-size input set for strong controllability in a self-damped system is equivalent to finding a minimum zero forcing set in a simple directed graph. However, finding a minimum zero forcing set in a simple directed graph is known to be NP-hard. Nevertheless, for particular graphs the problem has turned out to be easy, for example if the graph is a symmetric path, or the complete graph, or a tree. Can we identify more general simple directed graphs for which a minimum zero forcing set can be selected in polynomial time ? Is there a polynomial time algorithm if the graph is a simple \textit{directed tree} ?
\end{enumerate}

\medskip
\noindent
\small{\textbf{ACKNOWLEDGMENTS.} The authors acknowledge support from the Belgian Programme of Interuniversity Attraction Poles and an Action de Recherche Concert\'ee (ARC) of the Wallonia-Brussels Federation.}

\newpage

%
%
%
%
%
%
%
%
%
%
%
%
%

\newpage

\noindent
\textbf{Appendix A}

\medskip
\noindent
It turns out that the following result from [9, Corollary 11] is incorrect.

\begin{cor}\cite{chapman}
\label{cor81}
Consider an undamped system whose underlying graph has pattern $\textbf{A}$ (undamped = diagonal of $\textbf{A}$ is zero). In the bipartite graph $(V,V',E)$ associated with $\textbf{A}_\times$, denote by $S \subseteq V$ and $S' \subseteq V'$ the unmatched vertices resulting from a maximum constrained self-less matching of $\textbf{A}_\times$. Then, $S$ is a minimum-size input set for the strong controllability of the system.
\end{cor}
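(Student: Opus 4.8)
Since the statement is flagged as incorrect, the plan is not to prove it but to refute it, by producing an undamped graph for which the unmatched set $S$ of a maximum constrained self-less matching of $\textbf{A}_\times$ is strictly larger than a genuine minimum-size input set.

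First I would pin down what the two quantities really are. By Theorem \ref{thm55} (the restatement of Theorem \ref{thm42}), for an undamped system $V_{loop}=\emptyset$, so $S$ is a valid input set if and only if (i) $S$ is a zero forcing set of $G$ and (ii) $S$ is a zero forcing set of $G_\times$, both under the loop color change rule. On the other hand, combining Theorem \ref{thm34} with the equivalence established inside the proof of Theorem \ref{thm57}, a maximum constrained self-less matching of $\textbf{A}_\times$ corresponds to a minimum zero forcing set of the simple graph $G_s$ under the simple rule; hence the $S$ produced by the Corollary has size exactly $Z(G_s)$ computed with the simple rule.

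The conceptual gap I would exploit is that condition (i) uses the loop rule on $G$, under which a white vertex may force its unique white out-neighbour, whereas the self-less matching silently imposes the simple rule (only black vertices force). White forcing is strictly more powerful, so the true minimum, which need only satisfy (i) and (ii), can drop below $Z(G_s)$. Concretely I would take the undamped graph on $\{1,2,3,4\}$ with arcs $1\to2$, $2\to3$, $2\to4$, $3\to4$, $4\to2$, and test $S=\{1\}$. In $G$: vertex $1$ forces $2$, then the white vertex $3$ forces its unique white out-neighbour $4$, then $2$ forces $3$; so $\{1\}$ is a zero forcing set of $G$. In $G_\times$: vertex $1$ forces $2$, then $4$ self-forces because its only out-neighbour $2$ is already black, then $2$ forces $3$; so $\{1\}$ is a zero forcing set of $G_\times$. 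By Theorem \ref{thm55} the system is strongly $\{1\}$-controllable, and since vertex $1$ has no in-neighbour it must lie in every input set, so the minimum size is $1$. Meanwhile, under the simple rule $1$ forces $2$ and then stalls, because the only black vertex $2$ then has two white out-neighbours $3,4$; one checks $Z(G_s)=2$, so a maximum constrained self-less matching returns an $S$ (for instance $\{1,3\}$) of size $2>1$, contradicting the Corollary. As a sanity check I would also verify strong $\{1\}$-controllability directly, computing $\det[\,B\ AB\ A^2B\ A^3B\,]$ for a symbolic realization and confirming it is a nonzero monomial in the arc weights.

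The step I expect to be delicate is designing the gadget itself. The two requirements pull the out-neighbourhoods in opposite directions: the vertex that must self-force in $G_\times$ has to point only at already-black vertices, so it points ``backward'', yet condition (i) must still be met from the same single seed, which forces a white-vertex force somewhere, and that white force needs a distinct vertex with a unique white out-neighbour. Naive gadgets fail: a spontaneously white-forcing $2$-cycle satisfies (i) for free but stalls in $G_\times$, hence violates (ii). The asymmetric five-arc configuration above is precisely what lets a single seed drive $G$ through a white force and $G_\times$ through a self-force, in two different orders, while the simple rule genuinely gets stuck.
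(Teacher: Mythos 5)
Your proposal is correct, and it succeeds at what the paper itself does with this statement: refuting it rather than proving it. I checked your four-vertex gadget in detail: under the loop-graph colour change rule, $\{1\}$ is a zero forcing set of both $G$ (via the white force $3 \rightarrow 4$) and $G_\times$ (via the self-force $4 \rightarrow 4$, permitted since $V_{loop}=\emptyset$ for an undamped system), so Theorem \ref{thm55} gives strong $\{1\}$-controllability; meanwhile $Z(G_s)=2$ under the simple rule, and by Theorem \ref{thm34} combined with the equivalence inside the proof of Theorem \ref{thm57}, the unmatched set produced by the corollary has size exactly $Z(G_s)=2>1$. Your determinant sanity check also goes through: a symbolic realization gives $\det\left[\, B \ AB \ A^2B \ A^3B \,\right] = a_{21}^3\,a_{32}^2\,a_{43}\,\beta^4$, a nonzero monomial. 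The paper's refutation follows the same strategy (counterexample plus diagnosis) but executes it in the matching language rather than the zero-forcing language: it uses a smaller three-vertex graph (a $2$-cycle $1\leftrightarrow 2$ plus a vertex $3$ fed by both), enumerates the three self-less $2$-matchings of $\textbf{A}_\times$ by hand to show none is constrained (so the corollary would predict minimum input size $2$), and then applies Theorem \ref{thm42} directly to $S=\{1\}$. Your route buys a crisper conceptual diagnosis: the self-less restriction on the matching is precisely the restriction to the simple colour change rule, so the corollary silently computes $Z(G_s)$, whereas strong controllability of an undamped system requires only the strictly weaker loop-rule conditions; this is the zero-forcing translation of the paper's observation that the two matchings demanded by Theorem \ref{thm42} may be different and the one for $\textbf{A}_\times(S|.)$ need not be self-less. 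What the paper adds beyond the refutation, and your proposal omits, is the corrected statement: the unmatched set $S$ is still a valid input set, just not necessarily one of minimum size.
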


\noindent
Here is a counter-example to this result: the loop directed graph underlying the system is in Figure 3a). Patterns $\textbf{A}$ and $\textbf{A}_\times$ are: $$\textbf{A} = \left(\begin{array}{ccc} 0 & \star & 0 \\ \star & 0 & 0 \\ \star & \star & 0 \end{array}\right) \ \ \ \ \textbf{A}_\times = \left(\begin{array}{ccc} \star & \star & 0 \\ \star & \star & 0 \\ \star & \star & \star \end{array}\right)$$  

\begin{figure}
\centering
\includegraphics[width=0.99\textwidth]{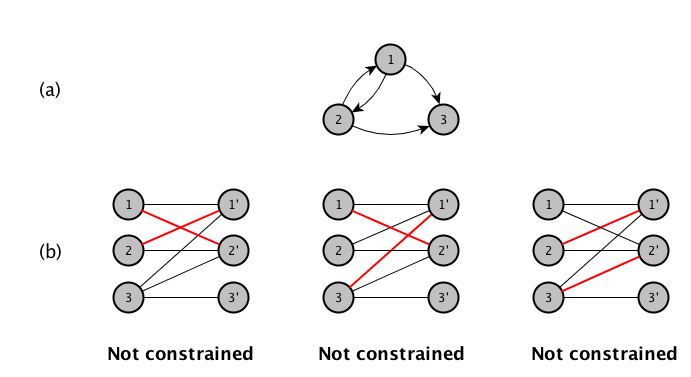}
\caption{(a) Loop directed graph with pattern $\textbf{A}$ underlying an undamped system - (b) The bipartite graph associated with pattern $\textbf{A}_\times$ and its three self-less 2-matchings. None of them is constrained.}
\label{fig1}
\end{figure}

\noindent
Figure 3b) shows that none of the three self-less 2-matchings in $\textbf{A}_\times$ is constrained. Therefore, a maximum constrained self-less matching in $\textbf{A}_\times$ has size 1. According to the previous corollary, the minimum size of an input set for strong controllability is 2.

\medskip
\noindent
However, if we apply Theorem \ref{thm42} to our example, we find that the system is strongly $S$-controllable from $S = \{1\}$. Indeed, since the system is undamped, $V_{loop} = \emptyset$ and we easily check that $\textbf{A}(S|.)$ and $\textbf{A}_\times(S|.)$ have both a constrained 2-matching.

\medskip
\noindent
Consequently, Theorem \ref{thm42} and Corollary \ref{cor81} are in contradiction. We have revised the proofs of the two results. Theorem \ref{thm42} has turned out to be correct whereas an error has been depicted in the proof of Corollary \ref{cor81}. Indeed, Corollary \ref{cor81} considers a maximum constrained SELF-LESS matching in $\textbf{A}_\times$. The self-less condition is essential to deduce that $\textbf{A}(S|.)$ and $\textbf{A}_\times(S|.)$ have both a constrained matching of size $n-|S|$. However, Theorem \ref{thm42} applied to an undamped system claims that the system is strongly $S$-controllable if and only if $\textbf{A}(S|.)$ and $\textbf{A}_\times(S|.)$ have both a $(n-|S|)$-constrained matching. Nevertheless, it may occur that these two matchings are different and that the one of $\textbf{A}_\times(S|.)$ is not self-less. That is what happens in our counter-example. Consequently, the maximum constrained self-less matching in Corollary \ref{cor81} may provide an input set for strong controllability whose size is not minimum. A correct statement of Corollary \ref{cor81} should be: 

\begin{cor}[Correct statement of Corollary \ref{cor81}] Consider an undamped system whose underlying graph has pattern $\textbf{A}$. In the bipartite graph $(V,V',E)$ associated with $\textbf{A}_\times$, denote by $S \subseteq V$ and $S' \subseteq V'$ the unmatched vertices resulting from a maximum constrained self-less matching of $\textbf{A}_\times$. Then, $S$ is an input set for the strong controllability of the system.
\end{cor}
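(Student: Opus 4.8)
The plan is to derive the corollary directly from Theorem \ref{thm42}, exploiting the fact that an undamped system has $V_{loop} = \emptyset$, so that the $V_{loop}$-less requirement in Theorem \ref{thm42} becomes vacuous. Writing $m = |S|$ and letting $\mathcal{M}$ be the maximum constrained self-less matching of $\textbf{A}_\times$ that leaves exactly $S \subseteq V$ unmatched, I first record that $\mathcal{M}$ saturates precisely $V \setminus S$ on the $V$-side, so that $|\mathcal{M}| = n - m$. By Theorem \ref{thm42}, with $V_{loop}=\emptyset$ it then suffices to produce a constrained $(n-m)$-matching in $\textbf{A}(S|.)$ and a constrained $(n-m)$-matching in $\textbf{A}_\times(S|.)$; I will obtain both from the single matching $\mathcal{M}$.

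For $\textbf{A}_\times(S|.)$: since the $V$-vertices left unmatched by $\mathcal{M}$ are exactly those indexed by $S$, deleting the rows indexed by $S$ removes only unmatched vertices, so $\mathcal{M}$ survives intact as an $(n-m)$-matching in the bipartite graph associated with $\textbf{A}_\times(S|.)$. It stays constrained because that bipartite graph is a subgraph of the one associated with $\textbf{A}_\times$: any alternative matching with the same matched vertices in the smaller graph would also be one in the larger graph, contradicting that $\mathcal{M}$ is constrained in $\textbf{A}_\times$.

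For $\textbf{A}(S|.)$ I will use the self-less hypothesis crucially. Since $\mathcal{M}$ uses no edge of the form $\{i,i\}$, it never uses the diagonal stars that distinguish $\textbf{A}_\times$ from $\textbf{A}$; as $\textbf{A}$ and $\textbf{A}_\times$ agree off the diagonal, every edge of $\mathcal{M}$ is also an edge of the bipartite graph associated with $\textbf{A}$. Hence $\mathcal{M}$ is a matching in that graph, it remains constrained there (again because $B_\textbf{A}$ is a subgraph of $B_{\textbf{A}_\times}$), and deleting the unmatched rows indexed by $S$ yields a constrained $(n-m)$-matching of $\textbf{A}(S|.)$. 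Combining the two matchings, Theorem \ref{thm42} gives that $(\textbf{A}, \textbf{B}(S))$ is strongly $S$-controllable, i.e. $S$ is an input set for strong controllability.

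The main obstacle, and the only genuinely nontrivial point, is verifying that the self-less property is exactly what lets $\mathcal{M}$ be transported from $\textbf{A}_\times$ back to $\textbf{A}$, and that the constrained property is preserved under both passage to a subgraph and deletion of unmatched rows. This is also precisely where the original statement of Corollary \ref{cor81} failed: it tacitly assumed the two constrained matchings required by Theorem \ref{thm42} could be taken to be one and the same self-less matching, and hence that $S$ is of minimum size, whereas Theorem \ref{thm42} only demands two possibly different constrained matchings. Thus $\mathcal{M}$ certifies that $S$ is an input set, but it cannot certify minimality, which is exactly the weakening in the corrected statement.
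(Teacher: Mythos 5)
Your proof is correct and follows essentially the same route as the paper's own reasoning in Appendix~A: the self-less property of the matching is exactly what allows it to serve simultaneously as a constrained $(n-m)$-matching for both $\textbf{A}(S|.)$ and $\textbf{A}_\times(S|.)$, after which Theorem~\ref{thm42} with $V_{loop} = \emptyset$ concludes. Your closing remark on why this argument certifies only that $S$ is an input set, and not minimality, also matches the paper's diagnosis of the error in the original Corollary~\ref{cor81}.
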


\end{document}